\title{Testing the Missing Completely at Random Assumption for Functional Data}
\DeclareMathOperator*{\argmin}{arg\,min}
\DeclarePairedDelimiter\abs{\lvert}{\rvert}
\newcommand{\ind}{\mathbbm{1}}
\newcommand{\norm}[1]{\lVert#1\rVert}
\newcommand{\var}{\textup{Var}}
\newcommand{\expv}{E}
\newcommand{\prob}{P}
\newcommand{\cov}{\mathbb{C}\textup{ov}}
\newcommand{\hatFA}{\widehat{F}_{\widehat{A}}}
\newcommand{\hatFB}{\widehat{F}_{\widehat{B}}}
\newcommand{\FA}{F_A}
\newcommand{\FB}{F_B}
\newcommand{\hatmA}{\widehat{\mu}_{\widehat{A}}}
\newcommand{\hatmB}{\widehat{\mu}_{\widehat{B}}}
\newcommand{\mA}{\mu_A}
\newcommand{\mB}{\mu_B}
\newcommand{\dt}{\textup{d}t}
\newcommand{\dF}{\textup{d}(\lambda \otimes \nu)}
\newcommand{\dnu}{\textup{d}\nu}
\newcommand{\sym}{\,\triangle\,}
\theoremstyle{definition}
\theoremstyle{plain}
\newtheorem{theorem}{Theorem}
\newtheorem{remark}{Remark}
\newtheorem{proposition}{Proposition}
\newtheorem{lemma}{Lemma}
\newtheorem{assumption}{Assumption}
\newtheorem{example}{Example}
\author[1]{Maximilian Ofner}
\author[1]{Siegfried H\"ormann}
\author[2]{David Kraus}
\author[3]{Dominik Liebl}
\affil[1]{Graz University of Technology, Austria}
\affil[2]{Masaryk University, Czech Republic}
\affil[3]{University Bonn, Germany}
\begin{document}

\maketitle

\begin{abstract}
We consider functional data which have only been observed on a subset of their domain.  This paper aims to develop statistical tests to determine whether the function and the domain over which it is observed are independent. The assumption that data are missing completely at random (MCAR) is essential for many functional data methods handling incomplete observations. However, no general testing procedures have been established to validate this assumption. We address this critical gap by introducing a testing framework which is generally based on a partition of the observation patterns. Besides deterministic partitions, we also consider a systematic approach based on clustering. We establish asymptotic results for our tests and illustrate the methodology in several real data applications.
\end{abstract}

\noindent%
{\it Keywords:}  clustering, hypothesis testing, partially observed, random sets, stochastic processes.

\begin{bibunit}
\section{Introduction}

Over the past decades, functional data analysis (FDA) has garnered significant attention in statistical research and in various fields of applications. FDA provides a statistical framework for analyzing samples of functions varying over a continuous domain. For an introduction to methods in FDA, we refer to the monographs \cite{ramsay1997functional}, \cite{ferraty2006nonparametric}, \cite{horvath2012inference} or \cite{hsing2015theoretical}. Depending on the design and structure of the data, one considers functions which are observed in one of three ways: (1) continuously, (2) on a dense grid, or (3) sparsely. In each case, the measurements are generally supposed to be adequately distributed over a common domain $I\subset \mathbb{R}$.

More recently, efforts have been directed toward developing techniques for functional data samples with varying domains, where each function is observable only on a subset $O \subset I$ and missing on $I \setminus O$. Corresponding data has often been termed \textit{partially observed functional data}, \textit{functional fragments} or \textit{functional snippets}. An early example is \cite{kraus2015components}, where incomplete heart rate profiles are analyzed. A subsample of the original data is depicted in Figure~\ref{fig:heartrate} (left); see also Section~\ref{sec:HeartRateData} for more information. The plot shows heart rate profiles (values in beats per minute) of~$n = 50$ participants during the time interval 8~pm and 2~am. While some profiles were measured on the full domain, several profiles are only partially observed due to a removal or failure of the measuring device.

\begin{figure}
\centering
\includegraphics[width=1\linewidth]{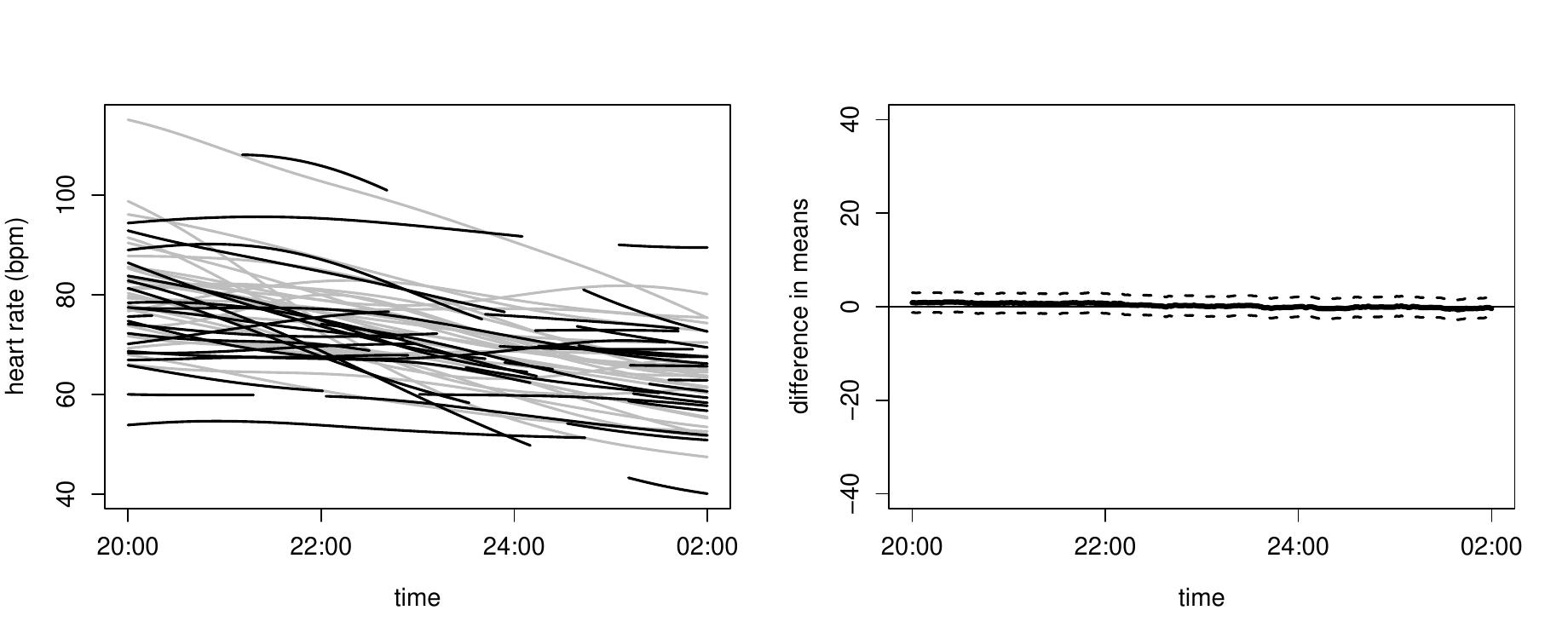}
\caption{Subsample of $n = 50$ heart rate profiles with complete functions colored gray and incomplete functions colored black (left). The conditional means of complete and incomplete groups estimated over 878 subjects do not reveal a significant difference; the dashed lines refer to a 95\% simultaneous confidence band (right).}
\label{fig:heartrate}
\end{figure}

The foremost research question which has been addressed to date for the above and related data sets is how to recover the missing information. Several reconstruction procedures have been proposed in the literature {\citep[see][]{delaigle2016approximating,descary2019recovering,kneip2020on, kraus2020ridge, delaigle2021estimating,lin2022mean,elias2023depth, ofner2024covariateinformed}}.  In this work, we focus on another crucial question which has not yet received much attention: what is the missingness mechanism, and how does it depend on the observed data? Specifically, we propose several tests to assess the hypothesis that the missingness mechanism is independent of the data, commonly known as \textit{missing completely at random} (MCAR).  The importance of this hypothesis is underlined by the fact that all of the above references and many related papers are crucially based on this assumption \citep[see][among many others]{liebl2013modeling,delaigle2013classification,liebl2019nonparametric,kraus2019inferential, kraus2019classification,lin2021basis, elias2023integrated, elias2024statistical}. As we are going to demonstrate, our methodology provides several new insights into the analysis of real data:
\begin{itemize}
\item in an example of heart rate profiles, our tests do not reveal evidence against the MCAR assumption which confirms the opinion of epidemiologists;
\item in an example of electricity price curves, our tests detect clear violations of the MCAR assumption which is in line with the structure of the electricity market;
\item in an example of temperature data, our tests reveal a potential defect of the temperature sensor and thus help to locate the technical cause of missingness.
\end{itemize}
Further details are discussed in Sections~\ref{sec:HeartRateData}-\ref{sec:TemperatureData} and Figure~\ref{fig:examples}.

To the best of our knowledge, the current literature on MCAR testing in the context of FDA is limited to \cite{liebl2019partially}. However, their approach only considers specific violations of the MCAR assumption that can be expressed using a low-dimensional monomial basis representation, enabling the use of a classical multivariate testing method. In contrast, our work explores a much broader range of MCAR violations, extending beyond those in \cite{liebl2019partially}. Our approach accounts for violations that affect the entire distribution of the functional data, not just its low-dimensional components.

To explain our core idea, consider again the heart rate data in Figure~\ref{fig:heartrate}. If the data were MCAR, the distribution of $X$ does not depend on $O$ and the curves in the complete and incomplete group should thus reveal similar characteristics. Indeed, Figure~\ref{fig:heartrate} (right) shows that the groups cannot be distinguished by their means, which provides evidence for the MCAR assumption. More generally, we describe our testing methodology by the following principle:
\begin{enumerate}
\item Assign each curve to one of two groups. This assignment is determined solely by $O$. An intuitive example is the ad hoc partitioning of data into complete and incomplete functions (see also Figure~\ref{fig:heartrate}), though a systematic clustering approach will also be discussed.
\item Compare specific characteristics of the curves between these two groups. Under MCAR, both subsamples should share the same characteristics, as the partition is independent of the data.
\end{enumerate}

The contributions of the underlying work are as follows. We propose a novel framework for testing MCAR based on a partition of the observation sets and discuss a systematic procedure for the selection of the corresponding groups. Then, we propose two different test statistics for the comparison of the groups and derive the asymptotic distributions. To this end, we also establish a theoretical framework for the analysis of càdlàg-valued (discontinuous) functional data that lie in the Skorokhod space $D[0,1]$ instead of the commonly used $L^2[0,1]$ or $C[0,1]$ function spaces. Such a viewpoint is useful when studying uniform properties of empirical mean functions which naturally contain jumps in the case of partial observations. In addition, we introduce a method for constructing simultaneous confidence bands for the difference in mean functions of the groups discussed above. These bands provide an attractive graphical tool for detecting potential violations of MCAR. Moreover, we comment on the consistency of our proposed tests under general alternatives. The testing framework is finally illustrated in a simulation study and three real data examples.

\section{Setup and notation}

Let $X = (X(t)\colon t \in [0,1])$ be a stochastic process which is defined on a probability space $(\Omega, \mathcal{F}, \mathbb{P})$. The domain is usually interpreted as a (scaled) time interval and we refer to $X(t)$ as the value of the process $X$ at time $t \in [0,1]$. Define $\mathcal{F} = \sigma(X(t)\colon t \in [0,1])$ to be the $\sigma$-algebra generated by the complete process and set $\mathcal{F}_S = \sigma(X(t)\colon t \in S)$ to be the $\sigma$-algebra generated by the information of $X$ on a measurable set $S \subset [0,1]$. Let $\mathcal{S}$ denote the space of measurable subsets of $[0,1]$. We henceforth assume that the realized sample paths lie in the space $D[0,1]$ of càdlàg functions, that is, functions which are right-continuous with left limits. Equipped with the Skorokhod metric, this space is Polish. As pointed out by \cite[p.~135]{billingsley2013convergence}, the process then defines a random function $X\colon \Omega \to D[0,1]$. Assume further that $X$ is only observed on a subset $O$ of its domain. This subset may vary from curve to curve and is thus interpreted as random set which takes values in the space $\mathcal{K}$ of non-empty compact subsets of $[0,1]$. Equipped with the Hausdorff metric, $\mathcal{K}$ is a compact metric space and therefore Polish \citep{molchanov2017theory}. Associated with~$O$ is the \textit{indicator function} $\xi_O\colon t \mapsto \ind\{t \in O\}$ which can be regarded as a random function.  Note that the information in our case is twofold as it includes both the processes $X$ as well as the indicator function~$\xi_O$. Whereas $X$ is not fully available in the missing data case, we still observe the point-wise product $X\xi_O = (X(t)\xi_O(t)\colon t \in [0,1])$ and the indicator function~$\xi_O$. The information of the indicator function $\xi_O(t)$ at time $t$ allows us to distinguish the case that~$X(t)$ is unobserved from $X(t)$ being equal to zero. An illustration is given in Figure~\ref{fig:illustration}.

Throughout this work, let $(X_i, O_i)_{i=1}^n$ be $n$ i.i.d.~copies of the generic $(X,O)$. It is important to note that $O_i$ may very well depend on $X_i$ (in the case that MCAR is violated).

\begin{figure}
\centering
\begin{tikzpicture}[x=1pt,y=1pt]
\definecolor{fillColor}{RGB}{255,255,255}
\path[use as bounding box,fill=fillColor,fill opacity=0.00] (0,0) rectangle (289.08, 72.27);
\begin{scope}
\path[clip] ( 50.40,  2.40) rectangle (286.68, 69.87);
\definecolor{drawColor}{RGB}{0,0,0}

\path[draw=drawColor,line width= 1.6pt,line join=round,line cap=round] ( 59.15, 24.07) --
	( 61.34, 25.80) --
	( 63.53, 27.52) --
	( 65.71, 29.22) --
	( 67.90, 30.88) --
	( 70.09, 32.49) --
	( 72.28, 34.04) --
	( 74.47, 35.53) --
	( 76.65, 36.93) --
	( 78.84, 38.24) --
	( 81.03, 39.46) --
	( 83.22, 40.58) --
	( 85.40, 41.59) --
	( 87.59, 42.50) --
	( 89.78, 43.29) --
	( 91.97, 43.97) --
	( 94.16, 44.54) --
	( 96.34, 45.00) --
	( 98.53, 45.36) --
	(100.72, 45.62) --
	(102.91, 45.79) --
	(105.09, 45.86) --
	(107.28, 45.86) --
	(109.47, 45.78) --
	(111.66, 45.64) --
	(113.85, 45.44) --
	(116.03, 45.20) --
	(118.22, 44.92) --
	(120.41, 44.62) --
	(122.60, 44.29) --
	(124.78, 43.95) --
	(126.97, 43.61) --
	(129.16, 43.27) --
	(131.35, 42.95) --
	(133.54, 42.63) --
	(135.72, 42.34) --
	(137.91, 42.07) --
	(140.10, 41.83) --
	(142.29, 41.62) --
	(144.47, 41.43) --
	(146.66, 41.26) --
	(148.85, 41.13) --
	(151.04, 41.01) --
	(153.23, 40.91) --
	(155.41, 40.83) --
	(157.60, 40.75) --
	(159.79, 40.68) --
	(161.98, 40.60) --
	(164.16, 40.51) --
	(166.35, 40.40) --
	(168.54, 40.26) --
	(170.73, 40.09) --
	(172.92, 39.88) --
	(175.10, 39.63) --
	(177.29, 39.32) --
	(179.48, 38.96) --
	(181.67, 38.53) --
	(183.85, 38.04) --
	(186.04, 37.48) --
	(188.23, 36.85);
\end{scope}
\begin{scope}
\path[clip] (  0.00,  0.00) rectangle (289.08, 72.27);
\definecolor{drawColor}{RGB}{0,0,0}

\path[draw=drawColor,line width= 0.4pt,line join=round,line cap=round] ( 50.40,  2.40) --
	(286.68,  2.40) --
	(286.68, 69.87) --
	( 50.40, 69.87) --
	cycle;
\end{scope}
\begin{scope}
\path[clip] (  0.00,  0.00) rectangle (289.08, 72.27);
\definecolor{drawColor}{RGB}{0,0,0}

\node[text=drawColor,anchor=base,inner sep=0pt, outer sep=0pt, scale=  1.00] at (168.54,-31.20) {Index};

\node[text=drawColor,rotate= 90.00,anchor=base,inner sep=0pt, outer sep=0pt, scale=  1.00] at ( 24.00, 36.13) {$X(t)$};
\end{scope}
\begin{scope}
\path[clip] ( 50.40,  2.40) rectangle (286.68, 69.87);
\definecolor{drawColor}{RGB}{190,190,190}

\node[text=drawColor,anchor=base,inner sep=0pt, outer sep=0pt, scale=  3.00] at (236.36, 16.36) {?};
\end{scope}
\end{tikzpicture}
\begin{tikzpicture}[x=1pt,y=1pt]
\definecolor{fillColor}{RGB}{255,255,255}
\path[use as bounding box,fill=fillColor,fill opacity=0.00] (0,0) rectangle (289.08, 72.27);
\begin{scope}
\path[clip] ( 48.00, 50.40) rectangle (286.68, 69.87);
\definecolor{drawColor}{RGB}{0,0,0}

\path[draw=drawColor,line width= 0.8pt,line join=round,line cap=round] ( 56.84, 69.15) --
	(189.44, 69.15);
\end{scope}
\begin{scope}
\path[clip] (  0.00,  0.00) rectangle (289.08, 72.27);
\definecolor{drawColor}{RGB}{0,0,0}

\path[draw=drawColor,line width= 0.4pt,line join=round,line cap=round] ( 56.84, 50.40) -- (277.84, 50.40);

\path[draw=drawColor,line width= 0.4pt,line join=round,line cap=round] ( 56.84, 50.40) -- ( 56.84, 44.40);

\path[draw=drawColor,line width= 0.4pt,line join=round,line cap=round] (101.04, 50.40) -- (101.04, 44.40);

\path[draw=drawColor,line width= 0.4pt,line join=round,line cap=round] (145.24, 50.40) -- (145.24, 44.40);

\path[draw=drawColor,line width= 0.4pt,line join=round,line cap=round] (189.44, 50.40) -- (189.44, 44.40);

\path[draw=drawColor,line width= 0.4pt,line join=round,line cap=round] (233.64, 50.40) -- (233.64, 44.40);

\path[draw=drawColor,line width= 0.4pt,line join=round,line cap=round] (277.84, 50.40) -- (277.84, 44.40);

\node[text=drawColor,anchor=base,inner sep=0pt, outer sep=0pt, scale=  1.00] at ( 56.84, 34.80) {0.0};

\node[text=drawColor,anchor=base,inner sep=0pt, outer sep=0pt, scale=  1.00] at (101.04, 34.80) {0.2};

\node[text=drawColor,anchor=base,inner sep=0pt, outer sep=0pt, scale=  1.00] at (145.24, 34.80) {0.4};

\node[text=drawColor,anchor=base,inner sep=0pt, outer sep=0pt, scale=  1.00] at (189.44, 34.80) {0.6};

\node[text=drawColor,anchor=base,inner sep=0pt, outer sep=0pt, scale=  1.00] at (233.64, 34.80) {0.8};

\node[text=drawColor,anchor=base,inner sep=0pt, outer sep=0pt, scale=  1.00] at (277.84, 34.80) {1.0};
\end{scope}
\begin{scope}
\path[clip] (  0.00,  0.00) rectangle (289.08, 72.27);
\definecolor{drawColor}{RGB}{0,0,0}

\node[text=drawColor,anchor=base,inner sep=0pt, outer sep=0pt, scale=  1.00] at (167.34, 16.80) {$t$};

\node[text=drawColor,rotate= 90.00,anchor=base,inner sep=0pt, outer sep=0pt, scale=  1.00] at ( 21.60, 60.13) {$\xi_O(t)$};
\end{scope}
\begin{scope}
\path[clip] (  0.00,  0.00) rectangle (289.08, 72.27);
\definecolor{drawColor}{RGB}{0,0,0}

\node[text=drawColor,rotate= 90.00,anchor=base,inner sep=0pt, outer sep=0pt, scale=  1.00] at ( 39.60, 51.12) {0};

\node[text=drawColor,rotate= 90.00,anchor=base,inner sep=0pt, outer sep=0pt, scale=  1.00] at ( 39.60, 69.15) {1};
\end{scope}
\begin{scope}
\path[clip] ( 48.00, 50.40) rectangle (286.68, 69.87);
\definecolor{drawColor}{RGB}{0,0,0}

\path[draw=drawColor,line width= 0.8pt,line join=round,line cap=round] (189.44, 51.12) --
	(277.84, 51.12);
\end{scope}
\end{tikzpicture}
\vspace{-0.5cm}
\caption{Illustration of random function $X$ and indicator function $\xi_O$ for $O = [0,0.6]$.}
\label{fig:illustration}
\end{figure}

\subsection{Hypothesis}

We consider the \textit{missing completely at random} assumption
\begin{equation}\label{MCAR}
H_0\colon \text{$X$ and $O$ are independent.}\tag{MCAR}
\end{equation}
This assumption concerns a probabilistic property of the underlying missingness mechanism and is equivalent to $\prob(O \subset K \vert \mathcal{F}) = \prob(O \subset K)$ for all $K \in \mathcal{K}$. The quantity $\prob(O \subset K)$ is called \textit{containment functional} and uniquely determines the distribution of $O$ \citep[p.~23]{molchanov2017theory}. A weaker condition is \textit{missing at random} (MAR), which can be defined similarly to \cite{farewell2022missing} as
\begin{equation}
\prob\left(O \subset K \vert \mathcal{F}\right) = \prob\left(O \subset K \vert \mathcal{F}_K\right), \qquad \text{for all $K  \in \mathcal{K}$.}\tag{MAR}
\end{equation}
Whereas MCAR is a statement about independence of two random variables, MAR should not be confused with conditional independence between random variables. Related discussions can be found in \cite{seaman2013meant}, \cite{mealli2015clarifying}, and \cite{farewell2022missing}.  Historically, a first thorough treatment of missing data was given by the seminal work of \cite{rubin1976inference}. Data that are neither MCAR nor MAR are called \textit{missing not at random} (MNAR). We will discuss different examples in the simulation study in Section~\ref{sec:NumericalIllustration}.

Aside from the fact that we want to test independence between two functional objects, which is a delicate problem on its own \citep[see, for instance][]{deb2020measuring}, the crucial difficulty we face in this context is that $X$ is not observable; we only observe $X$ on $O$. Hence, existing approaches for complete data are not applicable here. For multivariate data with missing values, the arguably most prominent test for MCAR is due to \cite{little1988test}. Recent contributions to this topic include \cite{zhangetal2019unified}, \cite{berrett2023optimal}, \cite{aleksic2024novel}, and \cite{bordino2024tests}. In the case of $\mathbb{R}^d$, the number of distinct observation patterns is bounded by~$2^d$. It is therefore possible to group data according to these patterns and consider an asymptotic scenario where the number of occurrences of each missingness pattern tends to infinity.
The main idea is then to verify whether the subdistributions estimated from these observation patterns are coherent.
This approach is generally infeasible for functional data since, in practice, all observation patterns are distinct. At the population level, we even have an uncountably infinite number of distinct observation sets. Hence, we propose to partition the observation patterns into two systematically chosen sets $\mathcal{A}$ and $\mathcal{B}$ and test if the distribution of $X$ depends on whether $O\in\mathcal{A}$ or $O\in\mathcal{B}$.

\subsection{Testing framework}

The random set $O$ takes values in the space $\mathcal{K}$ of non-empty compact subsets of $[0,1]$. We then consider a partition
$\mathcal{K} = \mathcal{A} \mathbin{\dot{\cup}} \mathcal{B}$ for given non-empty subsets $\mathcal{A}$ and $\mathcal{B}$, where $\mathcal{A} \mathbin{\dot{\cup}} \mathcal{B}=\mathcal{A} \cup \mathcal{B}$ with $\mathcal{A} \cap \mathcal{B}=\emptyset$ denotes the disjoint set union. A straightforward approach is to divide the data into complete and incomplete cases, or, more generally, to classify observations based on whether the Lebesgue measure of their observation set falls above or below a chosen threshold. A systematic approach based on clustering is discussed in Section~$\ref{sec:Clustering}$. We impose the following general assumptions on the partition.
\begin{assumption}\label{ass:Missing}
Let $A = \{O \in \mathcal{A}\}$ and $B = \{O \in \mathcal{B}\}$. The partition satisfies
\begin{enumerate}[label=(\roman*),topsep=0pt, partopsep=0pt]
\item $\prob(A) > 0$ and $\prob(B) > 0$,
\item $\inf_{t \in [0,1]} \{\prob(t \in O \vert A), \prob(t \in O \vert B)\} > 0$.
\end{enumerate}
\end{assumption}
Condition~(i) excludes the trivial case where all $O$ are attributed to the same subset. Condition~(ii) considers group-specific \textit{coverage functions} $t \mapsto \prob(t \in O\vert \cdot)$ and assumes that the information in both subsamples covers the domain $[0,1]$.

\begin{remark}\label{rem:I}
    Assumption~\ref{ass:Missing}~(ii) is not always satisfied in practice. In such a case, one could restrict the analysis to a closed set $I \subset [0,1]$ such that $\inf_{t \in I} \{\prob(t \in O \vert A), \prob(t \in O \vert B)\} > 0$. Proposition~\ref{prop:ChoiceOfI} in Section~\ref{se:centroid} discusses a generic choice of $I$ which guarantees (ii) for a systematic partition.
\end{remark}

Now since $\mathbb{R} \times \mathcal{K}$ is a Polish space, the regular conditional distributions $\prob(X(t) \leq z \vert A)$ and $\prob(X(t) \leq z \vert B)$ exist \citep[see, for example, Theorem~10.2.2 in][]{dudley2002real}. For our testing problem, we then focus on directional alternatives which violate
\begin{equation*}
    H_0^F \colon\quad\prob(X(t) \leq z \vert A) = \prob(X(t) \leq z \vert B), \qquad \text{for all $z \in \mathbb{R}, t \in [0,1]$},
\end{equation*}
where $H_0^F\supset H_0$. Hypothesis $H_0^F$ states that, for all $t \in [0,1]$, the conditional distribution of~$X(t)$ does not depend on whether $\{O \in \mathcal{A}\}$ or $\{O \in \mathcal{B}\}$. This hypothesis is clearly satisfied under MCAR, where~$X(t)$ is independent of $O$ and the violation of $H_0^F$ covers many practically important MCAR violations (see Section~\ref{sec:Consistency}). A further comparison of the hypotheses $H_0$ and~$H_0^F$ is given in the supplementary material. Note that it is in general impossible to
make inference about $\prob(X(t) \leq z \vert A)$ and $\prob(X(t)\leq z \vert B)$ as these terms are not estimable due
to the missing data. However, if MCAR holds true, these quantities are equal to $\prob(X(t) \leq z \vert t\in O,A)$ and $\prob(X(t) \leq z \vert t \in O, B)$, respectively, which are estimable from the available data. The general steps of our testing framework are summarized in Algorithm~\ref{alg:DescriptionMCAR} below.

\begin{algorithm}[H]
\caption{Description of MCAR testing framework}\label{alg:DescriptionMCAR}
\nl Choose a partition $\mathcal{K} = \mathcal{A} \mathbin{\dot{\cup}} \mathcal{B}$\;
\nl Reject MCAR if 
\begin{equation*}
H_0^F \colon\quad \prob(X(t) \leq z \vert O \in \mathcal{A}) = \prob(X(t) \leq z \vert O \in \mathcal{B}), \qquad \text{for all $z \in \mathbb{R}$, $t \in [0,1],$}
\end{equation*}
is violated\;
\end{algorithm}

\begin{remark}
Two questions naturally arise: why do we partition the observation patterns into only two groups, and why do we restrict the null hypothesis $H_0^F$ to marginal distributions? The main motivation in both cases is the limited coverage of the partially observed data and, in particular, the resulting constraints on practical feasibility. Introducing additional groups would further reduce the available joint observation domain (see Remark~\ref{rem:I}), thereby exacerbating data sparsity. This effect is particularly pronounced for finite-dimensional distributions (compare the bivariate case, where the joint domain is a subset of $[0,1]^2$), resulting in an even more substantial loss of information.
\end{remark}

\section{Clustering of observation sets}\label{sec:Clustering}

As outlined in Step~1 of Algorithm~\ref{alg:DescriptionMCAR}, our testing methodology requires a partition $\mathcal{K} = \mathcal{A} \mathbin{\dot{\cup}} \mathcal{B}$ of the observation sets. In this section, we propose a systematic procedure for the choice of this partition. The basic idea is to divide the observation sets into two clusters such that sets within a single cluster are more closely related than sets from different clusters. If a dependence exists between observation sets and functions, then similar (or dissimilar) observation sets are expected to correspond to similar (or dissimilar) functions. In this case, functions from different clusters will exhibit distinct characteristics and MCAR is rejected. While our methodology can be combined with any clustering algorithm, we restrict ourselves to a centroid-based clustering approach, which aligns well with our theoretical framework.

\subsection{Distance between sets}

To discuss a clustering procedure for set-valued random variables, we introduce a distance first. For two sets $S_1, S_2 \in \mathcal{S}$, define $S_1 \sym  S_2 = (S_1 \setminus S_2) \cup (S_2 \setminus S_1)$ to be the symmetric difference. Let~$\lambda$ denote the Lebesgue measure on $[0,1]$ and define $d(S_1,S_2) = \lambda(S_1 \sym S_2)$ which is sometimes referred to as \textit{Fréchet-Nikodym-Aronszajn} distance \citep{marczewski1958on}. This yields a Polish space $\mathcal{M} = \mathcal{S} / {\sim}$ that consists of equivalence classes of sets in $\mathcal{S}$ where $S_1 \sim S_2$ if and only if $\lambda(S_1\sym S_2)=0$. It is important to note that $S_1\sym S_2 = ([0,1]\setminus S_1)\sym ([0,1]\setminus S_2)$ which implies that any clustering based on $d$ will yield equivalent results, regardless of whether the clustering is applied to observation or missingness patterns. This is a desirable property which does not hold if we chose $d$ to be the Hausdorff distance. Moreover, we observe that
\begin{equation*}
d(S_1, S_2) = \int_0^1   \abs{\xi_{S_1}(t)-\xi_{S_2}(t)} \, \dt  = \int_0^1   \abs{\xi_{S_1}(t)-\xi_{S_2}(t)}^2 \, \dt,
\end{equation*}
and hence the Fréchet-Nikodym-Aronszajn distance between sets corresponds to an $L^1$-distance or squared $L^2$-distance of indicator functions \citep{marczewski1958on}. Different notions of centers exist for set-valued random variables and we consider a particular definition which is due to Vorob'ev. The \textit{Vorob'ev median} $M_0$ of $O$ is defined as
$M_0 = \left\lbrace t \in [0,1]\colon \prob(t \in O)\geq 1/2\right\rbrace$
and can be simply computed from the coverage function $t \mapsto \prob(t \in O)$. It satisfies $\expv[d(O,M_0)] \leq \expv[d(O,S)]$ for any measurable set $S \in \mathcal{S}$. For a general introduction to set-valued random variables, the interested reader is referred to \cite{molchanov2017theory}.

\subsection{Centroid-based clustering}\label{se:centroid}

Using the metric $d(S_1,S_2) = \lambda(S_1 \sym S_2)$, we can define two clusters by
\begin{align}\label{eq:Clustering}
\mathcal{A} = \left\lbrace O \in \mathcal{K}\colon d(O, M_{A}) \leq d(O, M_{B}) \right\rbrace, && \mathcal{B} = \left\lbrace O \in \mathcal{K}\colon d(O, M_{A}) > d(O, M_{B}) \right\rbrace,
\end{align}
where $M_A$ and $M_B$ are fixed choices of set-valued centers. For a systematic choice, consider a subspace $\mathcal{M}_0 \subset \mathcal{M}$ and define
\begin{equation}\label{eq:OptimalCenters}
    \{M_A, M_B\} = \argmin_{\{M_1, M_2\} \subset \mathcal{M}_0} \expv\left[ \min\{d(O, M_1), d(O, M_2)\}\right],
\end{equation}
whose elements are called \textit{optimal centers} of $O$. We assume that the minimizer $\{M_A, M_B\}$ of~\eqref{eq:OptimalCenters} is unique which automatically ensures that $M_A \neq M_B$. The optimal centers motivate a generic choice of the domain $I \subset [0,1]$ that satisfies the condition in Remark~\ref{rem:I}.

\begin{proposition}\label{prop:ChoiceOfI}
Consider a centroid-based clustering which is induced by the optimal centers $M_A$ and $M_B$ defined in \eqref{eq:OptimalCenters} for $\mathcal{M}_0 = \mathcal{M}$. Let $I = M_A \cap M_B \neq \emptyset$. Then it holds that
\begin{equation*}
    \inf_{t \in I} \, \{\prob(t \in O\vert A), \prob(t \in O\vert B)\} \geq 1/2.
\end{equation*}
\end{proposition}
The minimization problem in~\eqref{eq:OptimalCenters} relates to the population version of a 2-medians clustering problem and $M_{A}$ as well as $M_{B}$ define the centers of related Voronoi cells. In simple cases, they can be computed explicitly as the following example demonstrates.

\begin{example}[Monotone missingness pattern]\label{ex:[0,D]}
Let $D$ be a random variable on $[0,1]$ with cdf $F_D$ and define $O = [0,D]$ which realizes in the space $\{[0, d]\colon d \in [0,1]\}$. In this case, the centers are of the form $M_1 = [0,m_1]$ and $M_2 = [0,m_2]$ and it holds that
\begin{equation}\label{eq:[0,D]}
\expv \left[ \min\{d(O, M_1), d(O, M_2)\}\right] = \expv \left[ \min\{\abs{D - m_1}, \abs{D - m_2}\}\right].
\end{equation}
Minimizing the right-hand side of \eqref{eq:[0,D]} in $m_1$ and $m_2$ relates to the quantization of a univariate probability distribution $F_D$. Following~\citet[p.~66]{graf2000foundations}, optimal $m_{1}$ and $m_{2}$ need to satisfy the following conditions,
\begin{align*}
2F_D(m_1) = F_D((m_1+m_2)/2), && 2F_D(m_2) = 1+F_D((m_1+m_2)/2).
\end{align*}
For example, if $D \sim U[0,1]$, we obtain $m_{A} = 1/4$ and $m_{B} = 3/4$ which finally yields the optimal centers $M_A = [0,1/4]$ and $M_B = [0,3/4]$. By~Proposition~\ref{prop:ChoiceOfI}, the choice $I = M_A \cap M_B=[0, 1/4]$ then guarantees $\inf_{t \in I} \{\prob(t \in O \vert A), \prob(t \in O \vert B)\} \geq 1/2$ and hence the condition in Remark~\ref{rem:I} holds.
\end{example}
The partition in \eqref{eq:Clustering} clusters the observation intervals of Example~\ref{ex:[0,D]} according to their length. Such sets also appear in the real data application of Section~\ref{sec:electricity}.

\subsection{Estimation of optimal centers}
\label{se:centroid-est}

The optimal centers are generally unknown and need to be estimated from the data. To this end, define the $M$-estimator
\begin{equation*}
\{\widehat{M}_{A}, \widehat{M}_{B}\} = \argmin_{\{M_1, M_2\} \subset \mathcal{M}_0} \frac{1}{n} \sum_{i=1}^n  \min\{d(O_i, M_1), d(O_i, M_2)\}.
\end{equation*}

We tacitly exclude the case $\widehat{M}_A = \widehat{M}_B$, which occurs in the degenerate case where all $O_i$ are identical. The minimum of the above function can be computed by a conventional $2$-median clustering algorithm using the Vorob'ev median to approximate the centers. Our hypothesis tests are invariant to the labeling of $M_A$ and $M_B$, and since these centers are not separately identifiable, we assume without loss of generality that $d(\widehat{M}_A, M_A) \leq d(\widehat{M}_B, M_A)$. The following proposition yields consistency of the empirical centers in our case. 

\begin{proposition}\label{prop:ConsistencyCenters}
Suppose that $\mathcal{M}_0$ is compact with respect to $d$ and the minimizer $\{M_A, M_B\}$ of~\eqref{eq:OptimalCenters} is unique. Then,
$d(\widehat{M}_{A}, M_{A}) +d(\widehat{M}_{B}, M_{B}) = o_p(1)$ as $n \to \infty$.
\end{proposition}
Consistency of $k$-means in the multivariate case has been established by \cite{pollard1981strong}. Our proposition is related to a work of \cite{jiang2021consistency} who derived the consistency of $k$-medoids in general metric and non-metric spaces. An example that satisfies the conditions of Proposition~\ref{prop:ConsistencyCenters} is given by the observation patterns in Example~\ref{ex:[0,D]}. We finally obtain the data-adaptive partitions
\begin{align*}
\widehat{\mathcal{A}} = \left\lbrace O \in \mathcal{K}\colon d(O, \widehat{M}_{A}) \leq d(O, \widehat{M}_{B}) \right\rbrace, && \widehat{\mathcal{B}} = \left\lbrace O \in \mathcal{K}\colon d(O, \widehat{M}_{A}) > d(O, \widehat{M}_{B}) \right\rbrace.
\end{align*}

\begin{lemma}\label{lem:AmathcalA}
Suppose $\prob(d(O,M_{A}) = d(O,M_{B})) = 0$ and $d(\widehat{M}_{A}, M_{A}) + d(\widehat{M}_{B}, M_{B}) = o_p(1)$. For any $i$, it holds that $\prob(O_i \in \mathcal{A} \sym \widehat{\mathcal{A}}) = \prob(O_i \in \mathcal{B} \sym \widehat{\mathcal{B}}) = o(1)$ as $n \to \infty$.
\end{lemma}
For ease of reference, we introduce the following assumption.
\begin{assumption}\label{ass:hatA}
For any $i$, it holds that $\prob(O_i \in \mathcal{A} \sym \widehat{\mathcal{A}}) = o(1)$ as $n \to \infty$.
\end{assumption}

\section{Testing methodology}\label{sec:TestingMethodology}

We now discuss two  approaches for testing the hypothesis $H_0^F$ in Step 2 of Algorithm~\ref{alg:DescriptionMCAR}. First, we consider a direct comparison of means and second, we examine a more general comparison of one-dimensional distributions.

\subsection{Comparison of means}\label{sec:ComparisonOfMeans}

Let $\mu(t) = \expv[X(t)]$ and consider the hypothesis
\begin{equation*}
   H_0^\mu \colon\quad \expv \left[X(t)\vert A \right] = \expv\left[X(t)\vert  B\right], \quad \text{for all $t \in [0,1]$,}
\end{equation*}
where $H_0^\mu\supset H_0^F\supset H_0$. Hypothesis $H_0^\mu$ states that the mean function of $X$ is independent of whether $\{O \in \mathcal{A}\}$ or $\{O \in \mathcal{B}\}$. This provides a useful benchmark for the more general distribution test of the next section. Under MCAR,~$H_0^\mu$ is clearly satisfied, and we can estimate the unknown quantities by the cross-sectionally available data,
\begin{align*}
&\hatmA(t) = \frac{1}{n}\frac{1}{\widehat{p}_{\widehat{A}}(t)} \sum_{i = 1}^n X_i(t)\ind\{t \in O_i, O_i \in \widehat{\mathcal{A}}\}, && \widehat{p}_{\widehat{A}}(t) = \frac{1}{n} \sum_{i=1}^n \ind\{t \in O_i, O_i \in \widehat{\mathcal{A}}\}, && t \in [0,1],
\end{align*}
and likewise $\hatmB(t)$ and $\widehat{p}_{\widehat{B}}(t)$ for $\mathcal{B}$. Assumption~\ref{ass:Missing}~(ii) guarantees that the available information in both subsamples covers the domain $[0,1]$. Thus, the denominator $\widehat{p}_{\widehat{A}}$ will be nonzero if the sample size is sufficiently large.

It follows under our assumptions that
\begin{align*}
\hatmA(t) \quad \xrightarrow{p} \quad \mA(t) = \expv \left[X(t)\vert t \in O, A \right], && \hatmB(t) \quad \xrightarrow{p} \quad \mB(t) = \expv\left[X(t)\vert t \in O, B \right],
\end{align*}
both pointwise and uniformly, as $n \to \infty$. Under~MCAR, it holds that $\mA(t) = \expv[X(t)]$, as well as $\mB(t) = \expv[X(t)]$, and $\hatmA$ should thus be close to $\hatmB$. This motivates a test statistic of the form
\begin{equation*}
T_{\mu} = \sqrt{n} \, \norm{\hatmA - \hatmB}_\infty,
\end{equation*}
where $\norm{f}_\infty = \sup_{t \in [0,1]} \abs{f(t)}$. It is important to note that the empirical means $\hatmA$ and $\hatmB$ generally contain jumps and we thus utilize a central limit theorem for càdlàg-valued random functions. Whereas the central limit theorem requires only mild conditions in Hilbert spaces, more assumptions are required for the Skorokhod space $D[0,1]$. To apply corresponding results in the proof of Theorem~\ref{thm:CLT_D} formulated below, we introduce additional assumptions. To this end, let $C > 1$ be some absolute constant which may have a different value at each appearance. The following assumption is adapted from \cite{hahn1978central}.

\begin{assumption}\label{ass:Y}
The generic process $Y = (Y(t)\colon t \in [0,1])$ satisfies $\expv[Y^4(t)]< C$ and there exists some $\alpha > 1$ such that for all $0 \leq s \leq t \leq u \leq 1$,
\begin{enumerate}[label=(\roman*)]
\item $\expv[(Y(s)-Y(t))^2] \leq C \abs{t - s}^{\alpha/2}$,
\item $\expv[(Y(s)-Y(t))^2(Y(t)-Y(u))^2] \leq C \abs{u - s}^{\alpha}$.
\end{enumerate}
\end{assumption}

\noindent The Brownian motion and Poisson process satisfy Assumption~\ref{ass:Y}. Indeed, since both processes satisfy condition~(i) with $\alpha = 2$, condition~(ii) then directly results from the property of independent increments. An example which satisfies (i) and (ii) with $\alpha = 4$ is discussed in the following.

\begin{example}
Let $X = (X(t)\colon t \in [0,1])$ be a stochastic process with continuously differentiable sample paths such that $\expv[\norm{X'}^4_\infty] < C$. For any $0 \leq s \leq t \leq u \leq 1$, a pathwise application of the mean value theorem then yields
$\expv\left[ (X(s)-X(t))^2\right] \leq \expv[\norm{X'}_\infty^2] (t-s)^2$,
and $\expv\left[ (X(s)-X(t))^2(X(t)-X(u))^2\right] \leq \expv[\norm{X'}_\infty^4] (u-s)^4$.
\end{example}
The next example illustrates a case of an indicator process $\xi_O = (\xi_O(t)\colon t \in [0,1])$ that satisfies Assumption~\ref{ass:Y}.
\begin{example}\label{ex:O}
Consider independent random variables $U_1, U_2$ on $[0,1]$ with continuously differentiable cdf $F$ and define $O = [\min\{U_1, U_2\}, \max\{U_1, U_2\}]\subset [0,1]$. Then, it can be shown that $\prob(\xi_O(t) \neq \xi_O(u)) = 2(F(u)-F(t))(1+F(t)-F(u)) \leq C (u-t)$ and $\prob(\xi_O(s) \neq \xi_O(t), \xi_O(t) \neq \xi_O(u)) \leq 2 (F(t)-F(s))(F(u)-F(t)) \leq C (u-s)^2$. Hence, Assumption~\ref{ass:Y} holds with $\alpha = 2$. 
\end{example}

Equipped with the above assumptions, we are able to derive the asymptotic distribution of the test statistic based on the sup-norm.

\begin{theorem}\label{thm:CLT_D}
Let Assumptions~\ref{ass:Missing} and~\ref{ass:hatA} be satisfied and suppose that both $(X_i)_{i=1}^n$ and $(\xi_{O_i})_{i=1}^n$ are i.i.d.~$D[0,1]$-valued random variables satisfying Assumption~\ref{ass:Y} with $\expv{[\norm{X}_\infty^2]} < \infty$. Under~MCAR, $$T_\mu  \xrightarrow{d} \sup_{t\in [0,1]} \, {\abs{Z(t)}},$$
as $n \to \infty$, where $Z$ denotes a Gaussian process with continuous sample paths, zero mean, and covariance
\begin{equation}\label{eq:CovZ}
\begin{aligned}
k(s,t) = \prob(A)\, \frac{\expv[X^c(s)X^c(t)\xi_O(s)\xi_O(t)\vert A]}{\prob(s\in O, A)\prob(t\in O, A)} + \prob(B)\, \frac{\expv[X^c(s)X^c(t)\xi_O(s)\xi_O(t)\vert B]}{\prob(s\in O, B)\prob(t\in O, B)},
\end{aligned}
\end{equation}
where $X^c(t) = X(t) - \left(\mA(t)\ind\{A\}+\mB(t)\ind\{B\}\right)$.
\end{theorem}
Theorem~\ref{thm:CLT_D} utilizes a connection between partially observed functions and càdlàg processes. We believe that it is an interesting result in its own right, which has not been established in the literature to the best of our knowledge.

For approximating the distribution of $T_{\mu}$, one needs to estimate the covariance function. To this end, note that under MCAR, it actually holds that
\begin{equation}\label{eq:CovFac}
\begin{aligned}
&k(s,t)= \cov(X(s),X(t)) \left(\frac{\prob(s\in O, t\in O,A)}{\prob(s\in O, A)\prob(t\in O, A)} + \frac{\prob(s\in O, t\in O,B)}{\prob(s\in O, B)\prob(t\in O, B)}\right).
\end{aligned}
\end{equation}
Consequently, estimators of $k$ could be constructed from separate estimators for the factors involving~$X$ and $O$, respectively. In this case, it actually suffices to estimate $\cov(X(s), X(t))$ at pairs $(s,t)$ for which $\prob(s\in O, t\in O,A)+ \prob(s\in O, t\in O,B) > 0$; see also Theorem~1 in \cite{kraus2019inferential} for a discussion of covariance estimation in case of partially observed data. However, it is important to remark that the choice between \eqref{eq:CovZ} and \eqref{eq:CovFac} will ultimately affect the power of the resulting test. In the supplementary file, we define an estimator based on~\eqref{eq:CovZ} which consistently estimates the covariance of $\sqrt{n}\,(\hatmA - \hatmB)$ even if MCAR is violated.

Algorithm~\ref{alg:D} in the supplementary file summarizes the required steps for approximating the asymptotic distribution. A slight adaptation actually yields a construction method for simultaneous confidence bands of mean differences~$\hatmA-\hatmB$. This is shown in Algorithm~\ref{alg:SCB} in the supplementary file where a confidence band of constant width is constructed. The procedure could be extended to more general shape functions or fairness constraints; see also \cite{liebl2023fast}.

\subsection{Comparison of distributions}\label{sec:ComparisonOfDistributions}

We now opt for a more general approach and consider a simultaneous comparison of one dimensional distributions to test hypothesis $H_0^F$. To this end, let us define the available data estimator
\begin{align*}
&\hatFA(t,z) = \frac{1}{n} \frac{1}{\widehat{p}_{\widehat{A}}(t)}\sum_{i = 1}^n \ind\{X_i(t) \leq z, t\in O_i, O_i \in \widehat{\mathcal{A}}\},&& z \in \mathbb{R},t \in [0,1], 
\end{align*}
and define $\hatFB$ in an analogous way. For $t$ and $z$, it follows, under our assumptions, that
\begin{align*}
\hatFA(t,z) \xrightarrow{p} \FA(t,z) = \prob(X(t)\leq z\vert t \in O, A),  &&
\hatFB(t,z) \xrightarrow{p} \FB(t,z) = \prob(X(t)\leq z\vert t \in O, B),
\end{align*}
as $n \to \infty$. Under MCAR, $\FA(t, z) = \prob(X(t) \leq z)$ and $\FB(t, z) = \prob(X(t) \leq z)$. We therefore base a test on the comparison of $\hatFA$ and $\hatFB$. To this end, consider the product measure $ \lambda \otimes \nu$ where $\lambda$ is the Lebesgue measure on $[0,1]$ and $\nu$  some specific probability measure supported on $\Xi \subset \mathbb{R}$. We consider the following Cramér--von Mises type test statistic
\begin{equation*}
T_{F} = n \int_{[0,1]\times \Xi} \left(\hatFA(t,z) - \hatFB(t,z)\right)^2\, \dF(t,z).
\end{equation*}
Related tests have been explored by \cite{hall2007twosample}, \cite{bugni2009goodness}, and \cite{bugni2012specification}. 
The performance of the resulting test depends on the probability measure $\nu$. It can be chosen by the practitioner or in a data-adaptive way. A particular choice of $\nu$ is discussed in Section~\ref{sec:NumericalIllustration}. To study stochastic properties of~$T_{F}$, we consider the separable Hilbert space $H = L^2([0,1]\times \Xi, \lambda \otimes \nu)$. Then, $\hatFA$ and $\hatFB$ are viewed as $H$-valued random variables~$\mathcal{X}$ which satisfy $\expv[\norm{\mathcal{X}}_H^2]<\infty$. Here, $\norm{\cdot}_H$ denotes the induced norm on $H$, that is to say, $\norm{f}^2_H = \int_{[0,1]\times \Xi} f(t,z)^2 \,\dF(t,z)$.

\newpage

\begin{theorem}\label{thm:CLT_mu}
Let Assumptions~\ref{ass:Missing} and~\ref{ass:hatA} be satisfied. Under MCAR,
\begin{equation*}
T_{F} \quad \xrightarrow{d} \quad \int_{[0,1]\times \Xi} Z(t, z)^2 \, \dF(t,z),
\end{equation*}
as $n \to \infty$, where $Z$ is an $H$-valued Gaussian element with mean zero and covariance function
\begin{equation}\label{eq:CovRho}
\begin{aligned}
\rho(s,t,z_1,z_2) & = \prob(A)\, \frac{\expv[\ind^c\{X(s)\leq z_1\}\ind^c\{X(t)\leq z_2\}\xi_O(s)\xi_O(t)\vert A]}{\prob(s\in O, A)\prob(t\in O, A)} \\
&\qquad + \prob(B)\, \frac{\expv[\ind^c\{X(s)\leq z_1\}\ind^c\{X(t)\leq z_2\}\xi_O(s)\xi_O(t)\vert B]}{\prob(s\in O, B)\prob(t\in O, B)},
\end{aligned}
\end{equation}
where $\ind^c\{X(t)\leq z\} = \ind\{X(t)\leq z\}- \left(\FA(t,z)\ind\{A\}+\FB(t,z)\ind\{B\}\right)$.
\end{theorem}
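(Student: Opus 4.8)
The plan is to observe that $T_F = n\,\norm{\hatFA - \hatFB}_H^2$ and to reduce the statement to the weak convergence $\sqrt{n}(\hatFA - \hatFB) \xrightarrow{d} Z$ of $H$-valued random elements, where $Z$ is the centered Gaussian element with covariance $\rho$ in \eqref{eq:CovRho}. The conclusion then follows from the continuous mapping theorem, since $f \mapsto \norm{f}_H^2$ is continuous on $H$. This mirrors the reasoning behind Theorem~\ref{thm:CLT_L2}; the one genuinely new feature is the random normalization $\widehat{p}_A(t)$ in the denominator of $\hatFA$, which must be linearized before the Hilbert-space central limit theorem applies.

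First I would set $\widehat{G}_A(t,z) = \frac{1}{n}\sum_i \ind\{X_i(t)\leq z, O_i(t)=1, O_i \in \mathcal{A}\}$ and $G_A(t,z) = \prob(X(t)\leq z, O(t)=1, A)$, so that $\hatFA = \widehat{G}_A/\widehat{p}_A$ and $\FA = G_A/p_A$ with $p_A(t) = \prob(O(t)=1, A)$. Writing $\delta_p = \widehat{p}_A - p_A$ and using $G_A = \FA\, p_A$, a short manipulation gives
\begin{equation*}
\sqrt{n}(\hatFA - \FA) = \frac{1}{1 + \delta_p/p_A}\cdot\frac{1}{\sqrt{n}}\sum_{i=1}^n \xi_i^A, \qquad \xi_i^A(t,z) = \frac{O_i(t)\ind\{O_i \in \mathcal{A}\}}{p_A(t)}\big(\ind\{X_i(t)\leq z\} - \FA(t,z)\big).
\end{equation*}
The $\xi_i^A$ are i.i.d.\ mean-zero elements of $H$, and the same construction yields $\xi_i^B$. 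By Assumption~\ref{ass:Missing} one has $\inf_t p_A(t) > 0$, and the uniform consistency $\sup_t|\widehat{p}_A(t) - p_A(t)| \xrightarrow{p} 0$ underlying the stated uniform convergence of $\hatFA$ gives $\sup_t|\delta_p/p_A| = o_P(1)$; since $\frac{1}{\sqrt{n}}\sum_i \xi_i^A = O_P(1)$ in $H$ (the summands being bounded), the multiplicative factor may be replaced by $1$ at the cost of an $o_P(1)$ remainder in $\norm{\cdot}_H$. Under MCAR one has $\FA = \FB$ (both equal the marginal $\prob(X(t)\leq z)$), so the centering terms cancel in the difference and
\begin{equation*}
\sqrt{n}(\hatFA - \hatFB) = \frac{1}{\sqrt{n}}\sum_{i=1}^n W_i + o_P(1), \qquad W_i = \xi_i^A - \xi_i^B,
\end{equation*}
with $W_i$ i.i.d.\ mean-zero in $H$.

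The next step is the Hilbert-space central limit theorem (e.g.\ Theorem~2.7 in \cite{bosq2000linear}) applied to $\frac{1}{\sqrt{n}}\sum_i W_i$. Its only requirement, $\expv[\norm{W}_H^2]<\infty$, holds trivially: the indicators are bounded by $1$, the factors $1/p_A$ and $1/p_B$ are bounded by Assumption~\ref{ass:Missing}, and $\nu$ is a probability measure, so $\norm{W}_H$ is bounded. Hence $\frac{1}{\sqrt{n}}\sum_i W_i \xrightarrow{d} Z$ for a centered Gaussian $Z$ whose covariance is that of $W$. To identify this covariance with $\rho$, I would use that $\mathcal{A}$ and $\mathcal{B}$ are disjoint, so $\ind\{A\}\ind\{B\}=0$ and the cross-term $\expv[\xi^A(s,z_1)\xi^B(t,z_2)]$ vanishes, giving $\cov(W) = \cov(\xi^A) + \cov(\xi^B)$. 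Computing $\expv[\xi^A(s,z_1)\xi^A(t,z_2)]$ directly, factoring out $O(s)O(t)\ind\{A\}$ and recognizing $\ind\{X(s)\leq z_1\} - \FA(s,z_1) = \ind^c\{X(s)\leq z_1\}$ on the event $A$, together with $\prob(A)\,\expv[\,\cdot \mid A] = \expv[\,\cdot\,\ind\{A\}]$ and $\prob(O(s)=1,A) = p_A(s)$, reproduces exactly the first summand of \eqref{eq:CovRho}; the $\mathcal{B}$ term follows identically.

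I expect the main obstacle to be the control of the remainder produced by linearizing the ratio, i.e.\ showing that the discrepancy between $\sqrt{n}(\hatFA - \hatFB)$ and $\frac{1}{\sqrt{n}}\sum_i W_i$ is genuinely $o_P(1)$ in the $H$-norm rather than merely pointwise in $(t,z)$. This hinges on the uniform (in $t$) consistency of the denominators $\widehat{p}_A, \widehat{p}_B$ together with the lower bounds of Assumption~\ref{ass:Missing}(ii), which keep $1/p_A$ and $1/p_B$ bounded and prevent the denominators from degenerating. Once this uniform control is secured, the rest is the standard Hilbert-space central limit theorem followed by the continuous mapping theorem, exactly as for Theorem~\ref{thm:CLT_L2}; notably, because the summands are bounded, no analogue of Assumption~\ref{ass:Y} is required here.
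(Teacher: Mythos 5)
Your proposal is correct and follows essentially the same route as the paper: the paper likewise linearizes the ratio estimators (``using similar arguments as in the proof of Theorem~\ref{thm:CLT_D}'') to obtain $\sqrt{n}(\hatFA-\hatFB)=\frac{1}{\sqrt{n}}\sum_{i}\ind^c\{X_i(t)\leq z\}Z_i(t)+o_p(1)$ with $Z_i$ as in \eqref{def:Z}---and your $W_i=\xi_i^A-\xi_i^B$ is exactly this summand---before invoking Theorem~2.7 of \cite{bosq2000linear} and the continuous mapping theorem. Your version is in fact somewhat more explicit about the linearization and the remainder control, and your observation that boundedness of the summands removes any need for an analogue of Assumption~\ref{ass:Y} matches the theorem's hypotheses.
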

The approximation of the limiting distribution in Theorem~\ref{thm:CLT_mu} is discussed in the supplementary file.

\section{Consistency}\label{sec:Consistency}

To discuss consistency of our tests, suppose first that $(X,O)$ satisfies MCAR. Define a new process by $Y(t) = X(t) + \xi_{[0,1]\setminus O}(t)$ for $t \in [0,1]$. Clearly $(Y,O)$ violates MCAR but $(Y\xi_O, O) = (X\xi_O, O)$. This shows that it is generally impossible to distinguish all alternatives from the null based on the available data. In fact, any test has trivial power against all \emph{compatible} alternatives. For a related discussion, the interested reader is referred to \citep{berrett2023optimal}. In the following, we discuss practical conditions under which our tests are consistent.

\begin{theorem}\label{thm:power} It holds that
\begin{enumerate}[label=(\roman*)]
\item if $\norm{\mA-\mB}_{\infty} > 0$ and the assumptions of Theorem~\ref{thm:CLT_D} hold, $T_\mu \xrightarrow{p} \infty$ as $n \to \infty$;
\item if $\norm{\FA-\FB}_{H} > 0$ and the assumptions of Theorem~\ref{thm:CLT_mu} hold, $T_{F} \xrightarrow{p} \infty$ as $n \to \infty$.
\end{enumerate}
\end{theorem}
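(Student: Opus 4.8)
The plan is to handle all three parts with a common two-step scheme: first show that the relevant norm of the empirical difference converges in probability to the corresponding population separation, which is strictly positive under each alternative, and then multiply by the diverging normalizing factor ($n$ or $\sqrt{n}$). The crucial preliminary observation is that the consistency statements for $\hatmA,\hatmB$ and $\hatFA,\hatFB$ recorded just before Theorems~\ref{thm:CLT_L2} and~\ref{thm:CLT_mu} hold \emph{without} invoking MCAR: these estimators are sample averages which, by the law of large numbers together with Assumption~\ref{ass:Missing}(ii), target $\mA,\mB,\FA,\FB$ under the alternative exactly as under the null. Only the further identification of these limits with the unconditional quantities $\expv[X(t)\mid A]$ etc.\ uses MCAR, and that identification plays no role here.

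For part (i) I would start from the uniform, hence $L^2$, consistency $\normS{\hatmA - \mA} \xrightarrow{p} 0$ and $\normS{\hatmB - \mB} \xrightarrow{p} 0$, which give $\normS{(\hatmA - \hatmB) - (\mA - \mB)} \xrightarrow{p} 0$ by the triangle inequality. Since $g \mapsto \normS{g}^2$ is continuous on $L^2[0,1]$, the continuous mapping theorem yields $\normS{\hatmA - \hatmB}^2 \xrightarrow{p} \normS{\mA - \mB}^2 =: c > 0$. Writing $T_{\mu,L^2} = n\,\normS{\hatmA - \hatmB}^2$, for any $M > 0$ one has $M/n < c/2$ for $n$ large, so $\prob(T_{\mu,L^2} > M) \geq \prob(\normS{\hatmA - \hatmB}^2 > c/2) \to 1$, that is $T_{\mu,L^2} \xrightarrow{p} \infty$.

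Part (iii) is entirely analogous on the Hilbert space $H = L^2([0,1]\times\Xi)$: uniform consistency of $\hatFA$ and $\hatFB$ over $(t,z)$ transfers to $\norm{\cdot}_H$ convergence because $\nu$ is a probability measure and $[0,1]$ carries finite Lebesgue measure, whence $\norm{g}_H \leq \norm{g}_\infty$ and $\norm{\hatFA - \hatFB}_H^2 \xrightarrow{p} \norm{\FA - \FB}_H^2 > 0$, giving $T_F = n\,\norm{\hatFA - \hatFB}_H^2 \xrightarrow{p} \infty$. Part (ii) needs one extra observation that lets the $L^2$-separation hypothesis $\normS{\mA - \mB} > 0$ drive a sup-norm statistic: since $[0,1]$ has unit measure, $\normS{g} \leq \norm{g}_\infty$, so $\norm{\mA - \mB}_\infty \geq \normS{\mA - \mB} > 0$. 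The reverse triangle inequality with uniform consistency then gives $\norm{\hatmA - \hatmB}_\infty \xrightarrow{p} \norm{\mA - \mB}_\infty > 0$, and multiplying by $\sqrt{n}$ forces $T_{\mu,D} \xrightarrow{p} \infty$.

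Once the norm-level convergence is in place, the arguments are routine, so the only point requiring genuine care is justifying the \emph{uniform} (rather than merely pointwise) consistency of the estimators off the null. For $T_{\mu,D}$ and $T_F$ this is where Assumptions~\ref{ass:Y} enter, since they supply the moment and increment bounds needed for tightness in $D[0,1]$ and hence for uniform convergence, independently of whether $X$ and $O$ are independent. I therefore expect the main obstacle to be confirming that the empirical-process machinery used in the proofs of Theorems~\ref{thm:CLT_D} and~\ref{thm:CLT_mu} applies verbatim under $H_A$, with only the limiting mean (not the fluctuation structure) altered by the alternative; establishing this once suffices for both (ii) and (iii).
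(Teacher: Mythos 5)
Your proposal is correct and matches the paper's argument, which is exactly a law-of-large-numbers consistency statement in the respective spaces (citing Rao's LLN for $D[0,1]$) followed by the observation that the limiting norm is positive and the normalization diverges. The only cosmetic difference is in part (ii): you use the elementary bound $\normS{g}\leq\norm{g}_\infty$ to pass from the $L^2$ separation to the sup-norm, while the paper invokes continuity of $t\mapsto\mA(t)-\mB(t)$ to get the (two-sided) equivalence of the two norms being positive; both deliver the direction actually needed.
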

The theorem establishes general consistency results for our hypothesis tests. In case of a deterministic clustering, it can be shown that the tests have power against local alternatives of the form $\norm{\mu_{A,n}-\mu_{B,n}}_\infty = \frac{c}{\sqrt{n}}$  and $\norm{F_{A,n}-F_{B,n}}_{H} = \frac{c}{\sqrt{n}}$, for $c > 0$; see also the remark after the proof of Theorem~\ref{thm:power}. We now discuss general conditions which yield $\norm{\FA-\FB}_{H} > 0$.

\begin{proposition}\label{prop:ConsistencyDist}
Suppose that there exists some $(t_0, z_0) \in \textup{int}\left([0,1]\times \Xi \right)$ such that
\begin{enumerate}[label=(\roman*)]
    \item $(t,z) \mapsto F_A(t,z)-F_B(t,z)$ is continuous at $(t_0, z_0) $;
    \item $ \{O \in \mathcal{A}\} \not\perp\!\!\!\!\perp \{X(t_0) \leq z_0\}  \mid \{t_0 \in O\}$.
\end{enumerate}
Then it holds that $\norm{\FA-\FB}_{H} > 0$.
\end{proposition}

The crucial condition is (ii) which states that, conditional on $\{t_0 \in O\}$, the event $\{O\in \mathcal{A}\}$ depends on $\{X(t_0)\leq z\}$. For this class of alternatives we obtain $\norm{\FA-\FB}_{H} > 0$ and Theorem~\ref{thm:power} then implies that our distribution test is consistent. An example which satisfies (ii) is given in Example~\ref{ex:BMcensored} below.

\begin{proposition}\label{prop:consistency}
Suppose that Assumption~\ref{ass:Missing} holds, that $\mA(t)-\mB(t)$ is continuous and that $\norm{\mA-\mB}_{\infty} > 0$. If $\nu$ is supported on $\Xi$ and for each $t \in [0,1]$ it holds that $X(t) \in \textup{int}(\Xi)$ almost surely, then $\norm{\FA-\FB}_H > 0$.
\end{proposition}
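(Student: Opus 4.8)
The plan is to argue by contraposition: assuming $\norm{\FA - \FB}_H = 0$, I would show $\norm{\mA - \mB}_{L^2} = 0$, which contradicts the hypothesis. The mechanism linking the two norms is the elementary identity that recovers a mean from its distribution function. Since $X(t) \in \textup{int}(\Xi)$ almost surely, the conditional laws of $X(t)$ are concentrated strictly inside $\Xi$, so $\FA(t,\cdot)$ and $\FB(t,\cdot)$ vanish to the left of $\Xi$ and equal $1$ to the right of it. The layer-cake (integration-by-parts) formula then gives, for each fixed $t$,
\begin{equation*}
\mA(t) - \mB(t) = \int_{\Xi} \bigl(\FB(t,z) - \FA(t,z)\bigr)\, \dz,
\end{equation*}
the integral being with respect to Lebesgue measure; the boundary contributions cancel because both distribution functions share the same values $0$ and $1$ outside $\Xi$, and integrability of the difference follows from finiteness of $\mA(t)$ and $\mB(t)$.

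First I would invoke Fubini's theorem: $\norm{\FA - \FB}_H = 0$ means $\FA = \FB$ for $(\textup{Leb}\times\nu)$-almost every $(t,z)$, hence for Lebesgue-almost every $t$ one has $\FA(t,z) = \FB(t,z)$ for $\nu$-almost every $z$. The decisive step is to upgrade this $\nu$-almost-everywhere equality to the Lebesgue-almost-everywhere equality needed in the displayed integral. Here I would use that $\nu$ has full support on $\Xi$ (i.e.\ $\textup{supp}(\nu)=\Xi$): the agreement set has full $\nu$-measure and is therefore dense in $\Xi$, because any nonempty relatively open subset of $\Xi$ carries positive $\nu$-mass. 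Since $z \mapsto \FA(t,z)$ and $z \mapsto \FB(t,z)$ are right-continuous, agreement on a set dense in $\Xi$ (allowing each point to be approached from the right) forces $\FA(t,z) = \FB(t,z)$ for all $z \in \Xi$, in particular Lebesgue-almost everywhere.

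Plugging this into the identity yields $\mA(t) = \mB(t)$ for Lebesgue-almost every $t$, whence $\norm{\mA - \mB}_{L^2} = 0$, the desired contradiction. I expect the upgrade step to be the main obstacle: it is precisely where the two hypotheses cooperate, the full-support property of $\nu$ supplying density of the agreement set and the right-continuity of the distribution functions converting that density into a pointwise identity. A secondary technical point is the justification of the mean-recovery identity and the cancellation of its boundary terms, for which the almost-sure containment $X(t) \in \textup{int}(\Xi)$ is essential, since it guarantees that the entire transition of each distribution function from $0$ to $1$ takes place within the region that $\nu$ actually charges.
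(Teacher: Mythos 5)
Your proposal is correct and follows essentially the same route as the paper's proof: a contrapositive argument that reduces $\int_{\Xi}(\FA(t,z)-\FB(t,z))^2\,\textup{d}\nu(z)=0$ to pointwise equality of the conditional distribution functions on $\textup{int}(\Xi)$ via right-continuity and the full support of $\nu$, and then concludes $\mA(t)=\mB(t)$ from the almost-sure containment $X(t)\in\textup{int}(\Xi)$. The only difference is that you spell out the mean-recovery step with the layer-cake identity, which the paper leaves implicit.
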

The above proposition shows that the distribution test is more general in the sense that it also detects differences in means provided that the measure $\nu$ is chosen properly. To obtain further insights, consider the following example.

\begin{example}\label{ex:BMcensored}
Let $X = (B(t)\colon [0,1])$ be a Brownian motion and assume the censoring mechanism $O = \{t \in [0,1]\colon X(t) \in [a,b]\}$, for some $a < 0 < b$. Consider the partition given by $\mathcal{A} = \{[0,1]\}$ which corresponds to complete versus incomplete sets. It is easy to see for $a = -b$, the censoring does not affect the mean and hence we have $\norm{\mA-\mB}_{\infty} = 0$. However, using Proposition~\ref{prop:ConsistencyDist}, it can be shown that $\norm{\FA-\FB}_H > 0$. 
\end{example}
Note that in Example~\ref{ex:BMcensored}, $\norm{\FA-\FB}_H > 0$ actually holds for arbitrary $a < 0 < b$. Together with Theorem~\ref{thm:CLT_mu}, it shows that a simple clustering of complete versus incomplete sets is sufficient for~$T_F$ to detect the violation of MCAR. Moving towards general $a \neq -b$, it is also possible to get consistency of $T_{\mu}$. An explicit computation of $\norm{\mA-\mB}_\infty $ seems out of reach for  the general case; different types of arguments are needed for different pairs $a$ and~$b$. We focus here on the case, which presumably is the most difficult to detect for $T_{\mu}$, namely where $a=-b+\varepsilon$, with small $\varepsilon$ (almost symmetric case) and where $b$ is large ($\mA$ and $\mB$ are very close to zero). 

\begin{proposition}\label{prop:BrownianMean}
Take the same scenario as in Example~\ref{ex:BMcensored}. Then there exist $b_0>0$ and $\varepsilon_0>0$, such that if $a = -b + \varepsilon$ for $b>b_0$ and $0<\varepsilon<\varepsilon_0$, then $\norm{\mA-\mB}_{\infty}>0$.
\end{proposition}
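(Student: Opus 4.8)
The plan is to expand everything to first order in $\varepsilon$ about the symmetric configuration $a=-b$, where the reflection symmetry $X\mapsto -X$ forces $\mA\equiv\mB\equiv 0$, and to show that the leading responses of $\mA$ and $\mB$ to raising the lower barrier differ as functions of $t$. I first reduce the two-sample problem to a one-sample one. Set $\mu_0(t)=\expv[X(t)\mid O(t)=1]$; conditioning on $\{O(t)=1\}$ and splitting along $A,B$ gives $\mu_0(t)=w_A(t)\mA(t)+w_B(t)\mB(t)$ with $w_A(t)=\prob(A\mid O(t)=1)$, $w_B(t)=\prob(B\mid O(t)=1)$ summing to one and both positive by Assumption~\ref{ass:Missing}. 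Hence $\mA(t)-\mu_0(t)=w_B(t)(\mA(t)-\mB(t))$, so $\normS{\mA-\mB}>0$ if and only if $\mA\not\equiv\mu_0$. This is convenient because $\{O(t)=1\}=\{a<X(t)<b\}$ with $X(t)\sim N(0,t)$ makes $\mu_0$ an explicit truncated normal mean,
\begin{equation*}
\mu_0(t)=\frac{t\,(\varphi_t(a)-\varphi_t(b))}{\Phi_t(b)-\Phi_t(a)},
\end{equation*}
where $\varphi_t,\Phi_t$ are the $N(0,t)$ density and cdf; and, writing $m=\min_{s}X(s)$, $M=\max_{s}X(s)$, on $A=\{m>a,\,M<b\}$ one has $O(t)=1$ automatically, so $\mA(t)=\expv[X(t)\mid A]$.

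Next I differentiate at $\varepsilon=0$, where $\mu_0$ and $\mA$ both vanish. For $\mu_0$ the display above gives
\begin{equation*}
\dot{\mu}_0(t)=\frac{b\,\varphi_t(b)}{\Phi_t(b)-\Phi_t(-b)}.
\end{equation*}
For $\mA(t)=\expv[X(t)\ind_A]/\prob(A)$ I differentiate numerator and denominator by moving the lower barrier $a=-b+\varepsilon$. Raising the barrier removes precisely the paths whose minimum equals $-b$, so the numerator derivative is the boundary term
\begin{equation*}
\tfrac{d}{d\varepsilon}\expv[X(t)\ind_A]\big|_{\varepsilon=0}=-\,\expv[X(t)\ind\{M<b\}\mid m=-b]\,f_m(-b),
\end{equation*}
where $f_m(-b)=2\varphi_1(b)$ is the density of the minimum at $-b$ (reflection principle); dividing by $\prob(A)\to 1$ yields $\dot{\mA}(t)$.

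The comparison of these two profiles is the crux and is driven by large deviations in $b$. The marginal truncated mean obeys $\dot{\mu}_0(t)\asymp b\,\varphi_t(b)$, of exponential rate $b^2/(2t)$, because forcing $X(t)$ alone near the barrier at a fixed interior time costs $b^2/(2t)$; thus $\dot{\mu}_0$ is sharply concentrated near $t=1$. By contrast, conditioning the entire path on its minimum reaching $-b$ costs rate $b^2/2$ for every $t$, and the conditioned path follows the cheapest, essentially linear, large-deviation trajectory, so $\expv[X(t)\mid m=-b]\asymp -bt$ and $\dot{\mA}(t)\asymp bt\,\varphi_1(b)$, whose exponential rate $b^2/2$ is flat in $t$. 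Since $b^2/2<b^2/(2t)$ for every $t\in(0,1)$, the ratio $\dot{\mA}(t)/\dot{\mu}_0(t)\to\infty$ as $b\to\infty$ at each interior $t$, so for all large enough $b$ the functions $\dot{\mA}$ and $\dot{\mu}_0$ disagree on $(0,1)$; by the reduction this is the same as $\dot{\mA}\not\equiv\dot{\mB}$. Linearising $\mA-\mB=\varepsilon(\dot{\mA}-\dot{\mB})+o(\varepsilon)$ and using that $\mA-\mB$ is continuous (as noted after Theorem~\ref{thm:power}), I conclude $\normS{\mA-\mB}>0$ for $b>b_0$ and $0<\varepsilon<\varepsilon_0$.

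The hard part will be making the differentiation through the stay-in-band event rigorous and uniform. Concretely: (i) the boundary-term derivative rests on the joint law of $(X(t),m,M)$, which I would obtain from the method-of-images series for Brownian motion killed outside $(a,b)$, also justifying the interchange of differentiation and expectation; (ii) the asymptotic $\expv[X(t)\mid m=-b]\asymp -bt$ must be upgraded from heuristic to a lower bound of the correct exponential rate $b^2/2$ (via Schilder-type estimates or the explicit barrier densities); and (iii) promoting the first-order separation to the stated uniform region $\{b>b_0,\ 0<\varepsilon<\varepsilon_0\}$ requires the $o(\varepsilon)$ remainder to be controlled uniformly in $b$, which I would secure by bounding $\partial_\varepsilon(\mA(t_0)-\mu_0(t_0))$ away from zero on $[0,\varepsilon_0)$ at one fixed interior $t_0$.
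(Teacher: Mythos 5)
Your route is genuinely different from the paper's, and its leading-order arithmetic checks out. The paper reduces to the single point $t=1$, uses the convexity identity to replace $\mB(1)$ by the marginal truncated mean $\mu_C(1)$, converts $\mA(1)=\expv[B(1)\mid\tau\geq 1]$ into $\expv[B(\tau)\mid\tau\geq 1]$ by optional stopping (Lemma~\ref{lem:Etau}), and then evaluates both sides from explicit exit-probability series (Lemma~\ref{lem:probs}) before Taylor-expanding in $\varepsilon$; the comparison there is between constants, roughly $2\varepsilon b\varphi(b)$ against $\varepsilon b\varphi(b)$. You instead differentiate in $\varepsilon$ at a general interior $t$ and compare exponential rates in $b$: the reflection-principle factor $f_m(-b)=2\varphi(b)$ pins $\dot{\mA}(t)$ to the rate $e^{-b^2/2}$, while $\dot{\mu}_0(t)$ decays like $e^{-b^2/(2t)}$, so for $t<1$ you need no precise constant, only that $\expv[X(t)\ind\{M<b\}\mid m=-b]$ is negative and not too small. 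Your derivative formulas are correct (the truncated-normal derivative and $f_m(-b)=2\varphi(b)$ both check out), and at $t=1$ your $2b\varphi(b)$ versus $b\varphi(b)$ reproduces exactly the margin the paper extracts from its series, which is reassuring. What your approach buys is a cruder, rate-level separation in place of the paper's exact computation; what it costs is that you must handle the joint law of $(X(t),m,M)$ rather than only the exit time and exit position.

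That said, what you have is a plan rather than a proof, and the three items you defer are precisely where the content lies. Most critically, item (ii): the argument hinges on a quantitative bound $-\expv[X(t)\ind\{M<b\}\mid m=-b]\geq c>0$ at some fixed interior $t$ for all large $b$; ``the conditioned path follows the cheapest large-deviation trajectory'' is a heuristic, and without such a bound you cannot rule out the coincidence $\dot{\mA}(t)=\dot{\mu}_0(t)$. This is provable from the explicit joint density of $(X(t),\min_{s\leq t}X(s))$ together with independence of increments, but it must actually be written. Item (iii) is also subtler than your phrasing suggests: the proposition asks for one $\varepsilon_0$ valid for all $b>b_0$, and since $\partial_\varepsilon(\mA(t_0)-\mu_0(t_0))\vert_{\varepsilon=0}$ itself vanishes like $\varphi(b)$ as $b\to\infty$, ``bounded away from zero'' cannot be meant uniformly in $b$; you need the second-order remainder to be $O(\varepsilon^2\varphi(b))$ with a constant uniform in $b$, i.e.\ control of $\partial_\varepsilon^2$ on the same exponential scale as $\partial_\varepsilon$. (The paper's own $o(\varepsilon)$ bookkeeping is terse on the same point, but a complete write-up of your version must address it explicitly.) With (i)--(iii) filled in via the method-of-images densities you mention, this would stand as a valid alternative proof.
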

\noindent The results will be further illustrated in the numerical simulations of the next section. Figure~\ref{fig:examplecensored} shows censored paths along with the difference $\widehat{\mu}_A - \widehat{\mu}_B$ in empirical means for the considered case of a Brownian motion and $(a,b) = (-1,2)$.

\section{Numerical illustration}\label{sec:NumericalIllustration}

The methodology is implemented in R \citep{R}. We assume that curves are measured on a regular grid and approximate integrals by discrete summations. In our real-data applications, the partitioning of the observation sets is carried out as described in Section~\ref{sec:Clustering}, and the subdomain $I$ in Remark~\ref{rem:I} is defined as the intersection of the cluster centers; see Proposition~\ref{prop:ChoiceOfI}. For the simulations, we also consider other clustering approaches and choose $I \subset [0,1]$  such that, in each group, at least 25\% of the curves are observed on $I$. The measure $\nu$ of our test statistic $T_{F}$ is set to be a Gaussian probability measure, $\nu \sim N(\theta, \tau^2)$, with $\theta = \int_0^1 \expv[X(t)]\,\dt$ and $\tau^2 = \sup_{t \in [0,1]} \var(X(t))$. The parameters $\theta$ and $\tau^2$ are estimated under MCAR from the available data.

\subsection{Simulation study}\label{sec:SimulationStudy}

We repeatedly simulate $n$ independent copies $X_1, \dots, X_n$ of a standard Brownian motion $X$ evaluated on a grid of $100$ regularly spaced points in $[0,1]$. For checking numerical properties of the test procedures, we then estimate error probabilities over 1,000 such simulation runs. Additional results for non-Gaussian data are presented in the supplementary file.

\textit{Case 1: Missing completely at random.} First, we consider a situation in which the MCAR assumption is satisfied and simulate $O$ independently from $X$. To this end, let $U_1$ and $U_2$ be independent uniformly distributed random variables on $[0,1]$ and define $L = \min\{U_1,U_2\}$ and $U = \max\{U_1, U_2\}$. We then set $O = 
[0,1]$ with probability 1/2 and $O = [L,U]$ otherwise. Due to the missingness mechanism, about half of the curves are observed completely and the mean length of the observation sets equals $\expv[\lambda(O)] = 2/3$. Table~\ref{tab:MCAR} presents the estimated probability of a type I error for a nominal level of $\alpha = 0.05$. For comparative purposes, we also consider a simple partition of complete versus incomplete observation sets and an alternative variant which approximates the test distribution using a bootstrap. Details on our bootstrap procedure are provided in the supplementary file.

{
\begin{table}[h]
\centering
\begin{tabular}{|c|cc|cc|cc|cc|}
\cline{2-9}
\multicolumn{1}{c|}{} & \multicolumn{4}{c|}{systematic clustering} & \multicolumn{4}{c|}{complete vs.~incomplete} \\
\cline{2-9}
\multicolumn{1}{c|}{} & \multicolumn{2}{c|}{asymp} & \multicolumn{2}{c|}{boot} & \multicolumn{2}{c|}{asymp} & \multicolumn{2}{c|}{boot} \\
\hline
$n$ & $T_\mu$ & $T_{F}$ & $T_\mu$ & $T_{F}$ & $T_\mu$ & $T_{F}$ & $T_\mu$ & $T_{F}$ \\
\hline
100 & 0.069 & 0.085 & 0.043 & 0.069 & 0.065 & 0.065 & 0.049 & 0.057 \\
250 & 0.068 & 0.056 & 0.051 & 0.049 & 0.059 & 0.050 & 0.048 & 0.049 \\
500 & 0.057 & 0.051 & 0.045 & 0.050 & 0.053 & 0.045 & 0.047 & 0.044 \\
\hline
\end{tabular}
\caption{Estimated type I error probabilities in Case~1 for a nominal level of~$\alpha = 0.05$. }\label{tab:MCAR}
\end{table}}

\textit{Case 2: Censoring.} Motivated by the theoretical results in Section~\ref{sec:Consistency}, we now examine the performance of the tests for data where MCAR is violated.  We consider the following scenarios:
\begin{flushleft}
{\renewcommand{\arraystretch}{1}
\begin{tabular}{ll}
    (MNAR) & $O = \{t \in [0,1]\colon X(t) \in [a,b]\}$, \\
    (MAR) &$O = [0, \tau_{ab}]$, where $\tau_{ab} = \inf\{t \geq 0\colon X(t) \in \{a,b\}\}$,
\end{tabular}
}
\end{flushleft}
\noindent for $a = -1$, $b \in \{1.00, 1.10, \dots, 2.00\}$. Note that the second scenario satisfies the definition of MAR since $\tau_{ab}$ is a stopping time with respect to the canonical filtration of $X$. The setting $(a,b) = (-1,1)$ corresponds to symmetric censoring in which case the conditional mean functions corresponding to complete and incomplete sets are both equal to zero. See also Example~\ref{ex:BMcensored} and the subsequent discussion. In this situation, $T_\mu$ should reveal lower power as it is based on a comparison of means. Higher values of $b$ relate to a stronger deviation from the symmetric case which favors the mean comparison. In contrast, since symmetric censoring affects conditional distributions, $T_{F}$ is expected to have power even in the case $(a,b) = (-1,1)$.

Estimated rejection probabilities are presented in the first row of Figure~\ref{fig:censoredrejections}. In scenario (MAR), we also compare our methodology to the sequential multiple hypothesis test $T_{\text{LR}}$ proposed by \cite{liebl2019partially}, which has been formulated for observation intervals $O = [0, D]$. The plot confirms our theory whereby the distributional test~$T_{F}$ is more powerful than the mean comparison~$T_\mu$. The latter requires some asymmetry to detect the violations and approaches the significance level $\alpha = 0.05$ in the case $(a,b) = (-1,1)$. In Figure~\ref{fig:examplecensored} (left), we present a particular sample of the simulation study for the case of strong asymmetry with $(a,b) = (-1, 2)$. Figure~\ref{fig:examplecensored} (right) reveals the violation of the MCAR hypothesis: for~$t \in [0.40, 1]$, the conditional mean in the complete sample is significantly greater than the one computed on the incomplete sample only.

To examine the sensitivity with respect to the selection of the groups, we consider different deterministic choices that partition the observation sets according to their size. Specifically, we define $\mathcal{A} = \{O \in \mathcal{K}\colon \lambda(O) < \delta\}$ for $\delta \in \{0.50, 0.60, \dots, 1.00\}$. Rejection probabilities of the distribution test for $(a,b) = (-1,1)$ are shown in the second row of Figure~\ref{fig:censoredrejections}. The reference line denotes the mean rejection probability of the systematic clustering approach.
In the particular setting of scenario~(MNAR), partitioning the data into complete versus incomplete ($\delta = 1$) yields the highest power and clearly outperforms the systematic procedure for $n = 100$. This is in contrast to the setting considered in scenario~(MAR), where the systematic procedure reveals a rejection probability that is close to the optimal grouping which corresponds to $\delta = 0.6$. In both settings, the power increases with the sample size across all partitions.

\begin{figure}
\centering
\begin{subfigure}[b]{0.49\textwidth}
\caption*{Scenario (MNAR)}
\includegraphics[width=\textwidth]{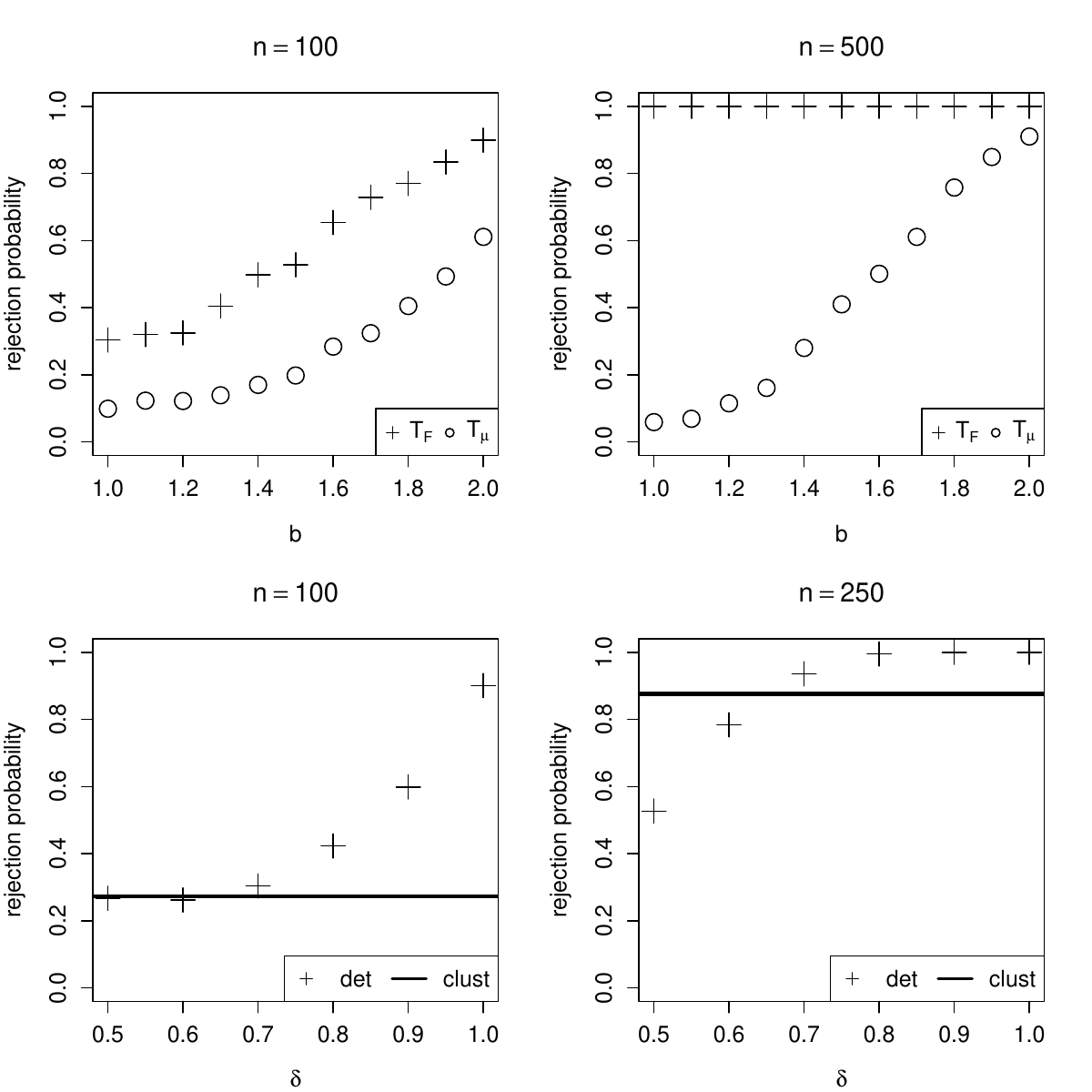}
\end{subfigure}
\begin{subfigure}[b]{0.49\textwidth}
\caption*{Scenario (MAR)}
\includegraphics[width=\textwidth]{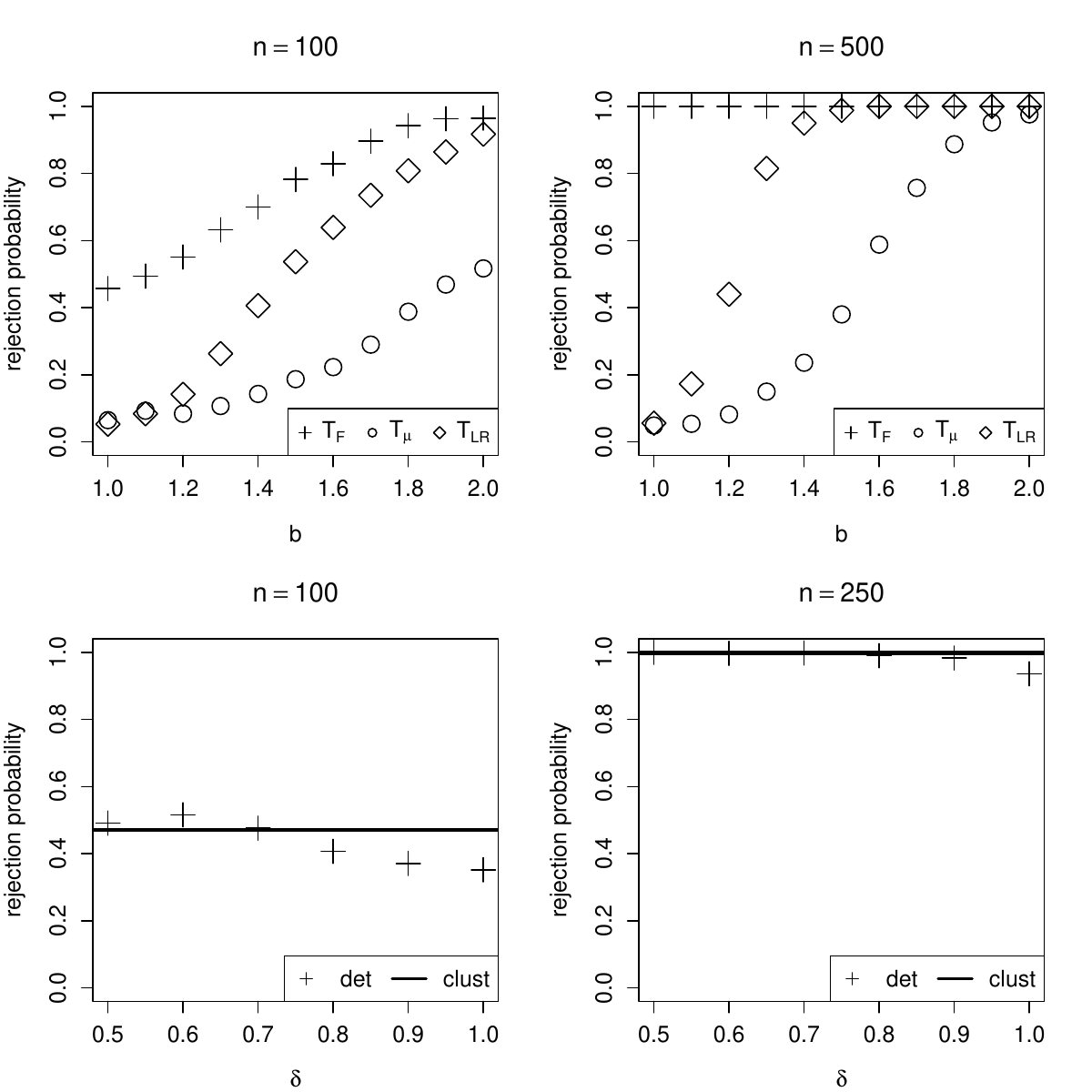}
\end{subfigure}
\caption{Rejection probabilities in Case~2 using tests based on asymptotic distributions. Top row:~comparison of $T_F$ and $T_\mu$.  Bottom row:~comparison of systematic and deterministic clustering with respect to the length $\delta$ of the observable domain.}
\label{fig:censoredrejections}
\end{figure}

\begin{figure}
\centering
\includegraphics[width=0.8\linewidth]{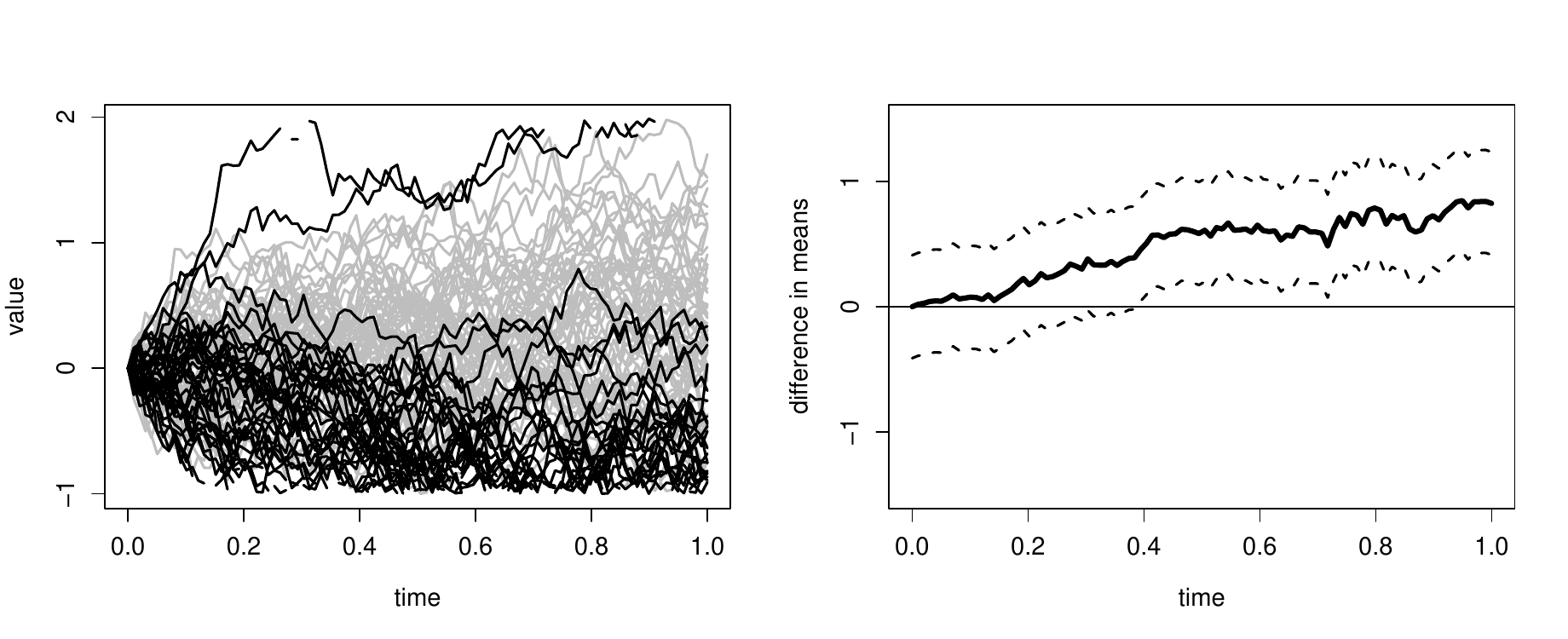}
\caption{Sample of $n = 100$ paths of a Brownian motion with (left) and the estimated difference $\widehat{\mu}_A - \widehat{\mu}_B$ in means~(right). Curves are grouped into complete $(\mathcal{A})$ versus incomplete $(\mathcal{B})$ and observed if $X(t) \in [-1, 2]$. The dashed line relates to a simultaneous confidence band with coverage probability~95\% (see Algorithm~\ref{alg:SCB} in the supplementary file).}
\label{fig:examplecensored}
\end{figure}

\subsection{Heart rate data}\label{sec:HeartRateData}

Next, we apply our methodology to incompletely observed heart rate profiles from the ``Swiss Kidney Project on Genes in Hypertension" \citep{alwan2014epidemiology} considered in the works \cite{kraus2015components} and \cite{kraus2019inferential}. The data consist of $n = 878$ aligned curves recorded on an equispaced grid of $361$ points in the time interval~$[8~\text{pm}, 2~\text{am}]$. Here $X_i(t)$ refers to the heart rate (in beats per minute) of a subject $i$ at time~$t$. In 527 cases (around~60\%), observations are complete. Following \cite{kraus2015components}, missing data is either caused by participants that remove measuring devices due to discomfort, or failure of the devices. According to experts, a MCAR assumption would be plausible. Our methodology does not detect a significant violation of MCAR (see Table~\ref{tab:ResultRealData}). The sample of heart rate profiles is shown in the first row of Figure~\ref{fig:examples}. In addition, Figure~\ref{fig:examples} visualizes the observation patterns in form of a ``barcode plot" as well as the mean difference $\hatmA - \hatmB$ together with a simultaneous confidence band obtained by Algorithm~\ref{alg:SCB} discussed in the supplementary file. The difference curve does not reveal any departures from the zero line.

\subsection{Electricity data}\label{sec:electricity}

We reconsider an example from the German electricity market data originally presented in \cite{liebl2019partially}. The data set consists of $n = 583$ weekly, partially observed, and pre-smoothed price curves $X_i$, which represent the logarithmized price (EURO/MW) as a function of electricity demand (MW). We focus on the domain $[1700, 2500]\text{ MW}$, where missing curves occur, and consider $161$ equispaced points within this range. Each price curve is initially observed over the beginning of the domain but may become missing over the later part. Once a curve $X_i$ becomes missing, it remains so. Specifically, for each curve, there exists some $D_i\in(1700, 2500] \text{ MW}$ such that the observation set satisfies $O_i = [1700, D_i]\text{ MW}$; see the left plot in the second row of Figure~\ref{fig:examples}. This yields a monotone missingness pattern as considered in Example~\ref{ex:[0,D]} and our systematic approach thus clusters the curves according to the extent of their observations.

The structure of the electricity market suggests that price curves observed over larger subdomains (which correspond to a larger $D_i$) tend to exhibit overall higher price levels, and vice versa. Consequently, the missing completely at random (MCAR) assumption may be violated in this context. Our tests confirm this intuition and strongly reject the MCAR hypothesis (see Table~\ref{tab:ResultRealData}). Additionally, Figure~\ref{fig:examples} illustrates the difference in mean curves $\hatmA-\hatmB=(\hatmA(t)-\hatmB(t)\colon t\in I)$ along with the corresponding $95 \%$ simultaneous confidence band (Algorithm \ref{alg:SCB} of the supplementary file), which significantly deviates from the zero line across the subdomain $I$.

\subsection{Temperature data}\label{sec:TemperatureData}

Finally, we consider half-hourly recorded temperature curves from Graz (Austria). The data are provided by \cite{data2023} and have initially been considered in \cite{ofner2024covariateinformed}. Out of $n = 76$ daily curves, $66$ are observed on the full domain. The remaining~10 days contain missing values in the second half of the day; see the third row of Figure~\ref{fig:examples} for a visualization of the data set. According to the monitoring executive, the cause of missingness is unclear and can potentially be due to the sensors' sensitivity to high temperature values or caused by a power problem in the transmission unit. The former relates to censoring and would resemble a clear violation of MCAR. Testing MCAR could thus help the monitoring agency to locate the error in the measuring device. The test distributions are approximated by a bootstrap procedure (see~Section~\ref{sec:Bootstrap} of the supplementary file). Because of the small $p$-values (see Table~\ref{tab:ResultRealData}), an issue with the temperature sensor is plausible and should be further investigated.

\begin{figure}[!htp]
\thisfloatpagestyle{empty}
\begin{center}
\begin{tabular}{|c|cccrrrr|}
\hline
data        & $n$ & $\vert \widehat{\mathcal{A}}\vert$ & distribution & $p_{\mu}$ & $p_F$ & $T_\mu$ &  $T_F$ \\
\hline
heart rate  & 878 & 809 & asymptotic &  0.765   & 0.911   & 42.49 &  0.97 \\
electricity & 583 & 302 & asymptotic &  $<0.001$ & $<0.001$ & 47.76  & 82.40\\
temperature & 76  & 68  & bootstrap  &  $0.036$ & $0.054$  & 32.59  & 2.10 \\
\hline
\end{tabular}
\end{center}
\captionof{table}{Summary of hypothesis tests.}\label{tab:ResultRealData}

\vspace{0.5cm}

\centering
    \begin{subfigure}[b]{0.15\textwidth}
        \centering
        \caption*{$\mathcal{A}$}
        \vspace{-0.6cm}
        \includegraphics[width=\textwidth]{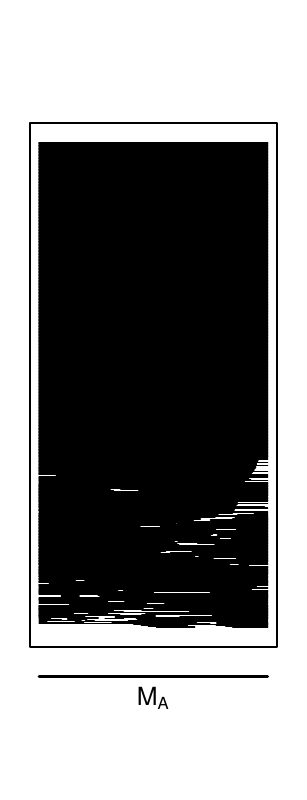}
    \end{subfigure}
    \begin{subfigure}[b]{0.15\textwidth}
        \centering
        \caption*{$\mathcal{B}$}
        \vspace{-0.6cm}
        \includegraphics[width=\textwidth]{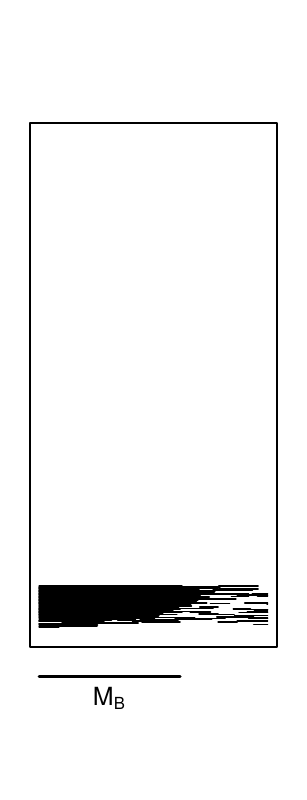}
    \end{subfigure}
    \begin{subfigure}[b]{0.3\textwidth}
    \centering
    \caption*{sample}
        \vspace{-0.6cm}
        \includegraphics[width=\textwidth]{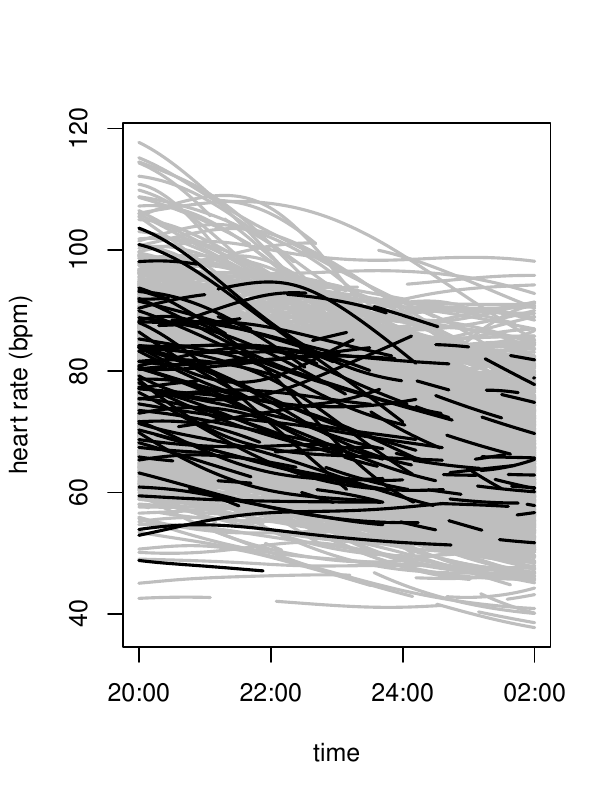}
    \end{subfigure}
    \begin{subfigure}[b]{0.3\textwidth}
    \centering
    \caption*{mean difference}
    \vspace{-0.6cm}
    \includegraphics[width=\textwidth]{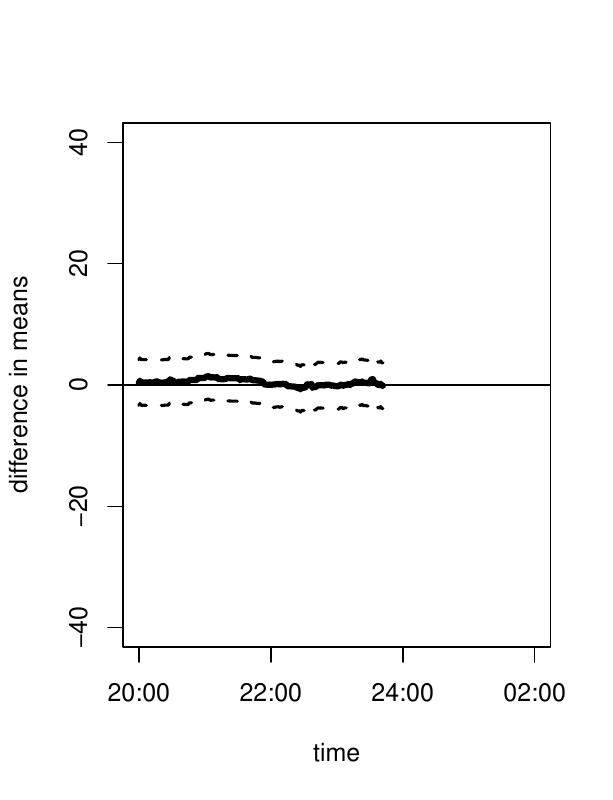}
    \end{subfigure}
    
    \begin{subfigure}[b]{0.15\textwidth}
        \centering
        \includegraphics[width=\textwidth]{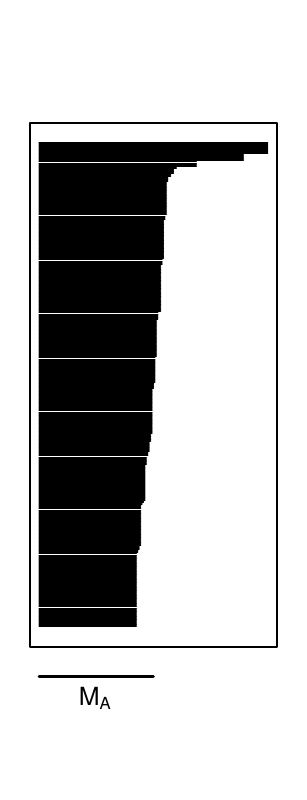}
    \end{subfigure}
    \begin{subfigure}[b]{0.15\textwidth}
        \centering
        \includegraphics[width=\textwidth]{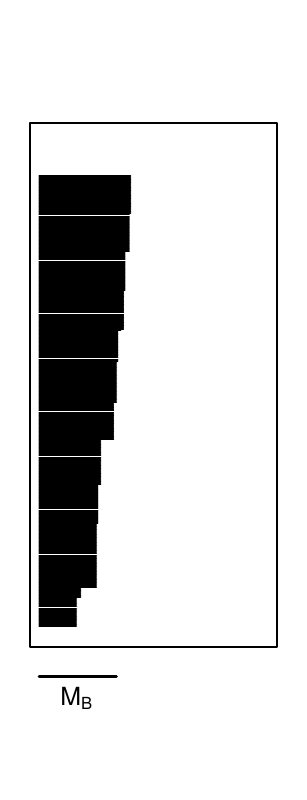}
    \end{subfigure}
    \begin{subfigure}[b]{0.3\textwidth}
        \centering
        \includegraphics[width=\textwidth]{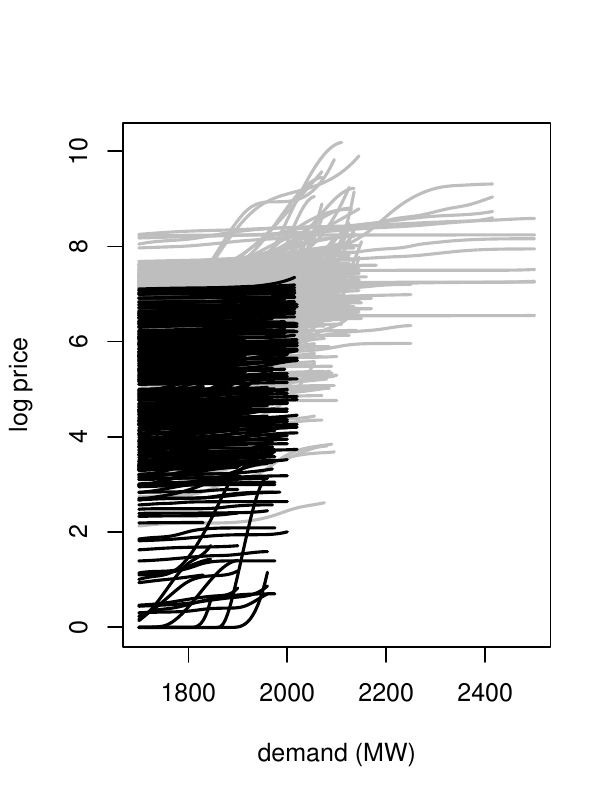}
    \end{subfigure}
    \begin{subfigure}[b]{0.3\textwidth}
        \centering
        \includegraphics[width=\textwidth]{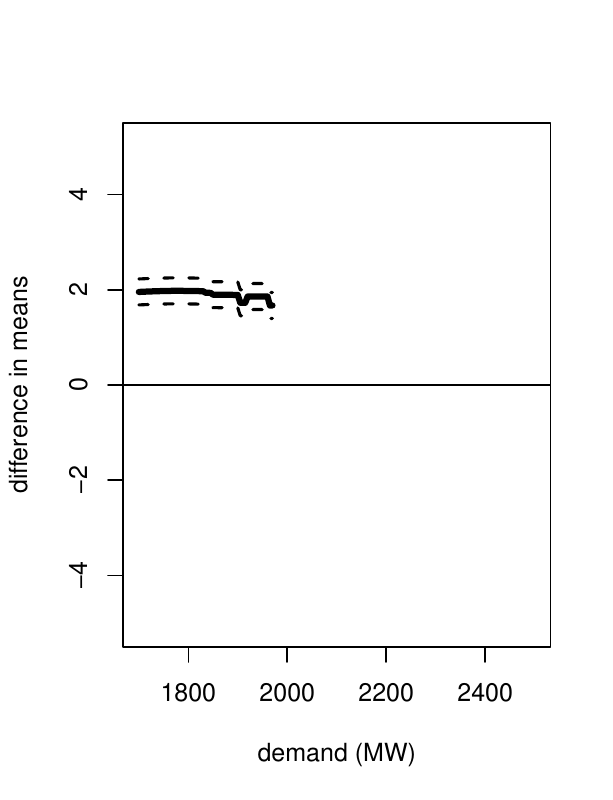}
    \end{subfigure}
    
    \begin{subfigure}[b]{0.15\textwidth}
        \centering
        \includegraphics[width=\textwidth]{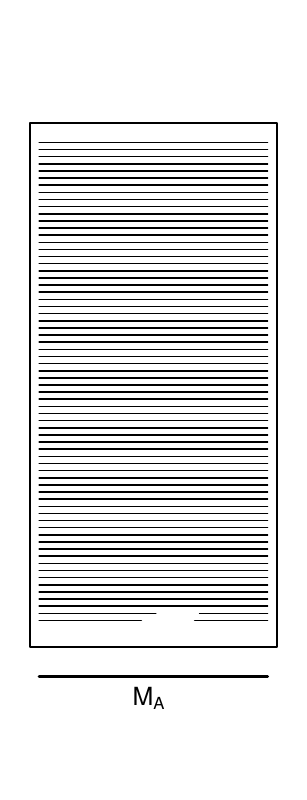}
    \end{subfigure}
    \begin{subfigure}[b]{0.15\textwidth}
        \centering
        \includegraphics[width=\textwidth]{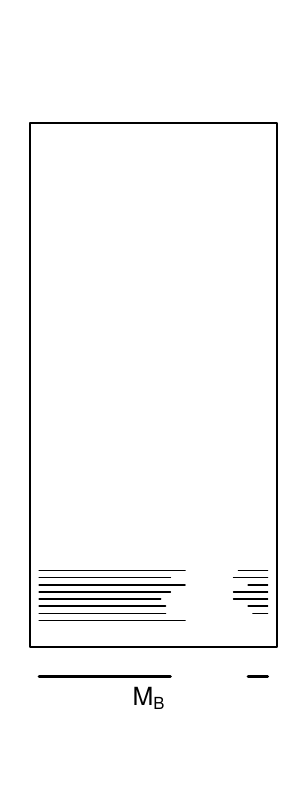}
    \end{subfigure}
    \begin{subfigure}[b]{0.3\textwidth}
        \centering
        \includegraphics[width=\textwidth]{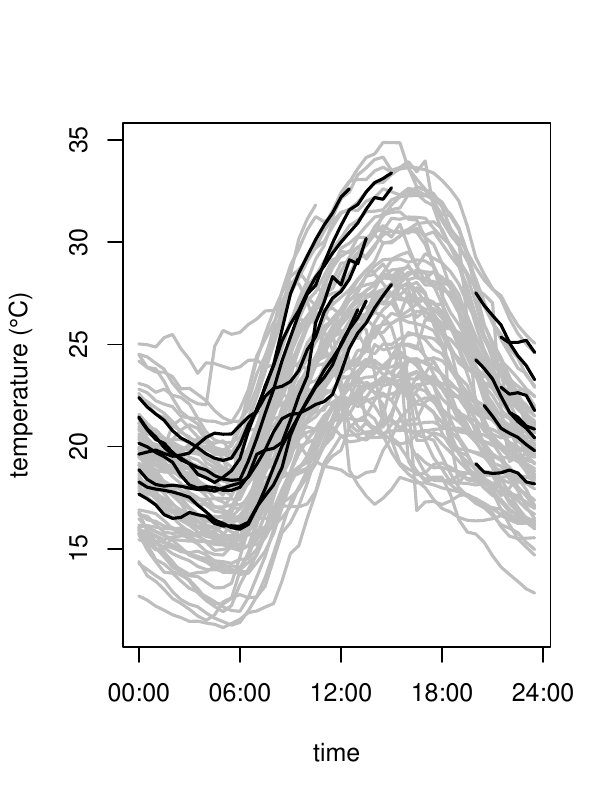}
    \end{subfigure}
    \begin{subfigure}[b]{0.3\textwidth}
        \centering
        \includegraphics[width=\textwidth]{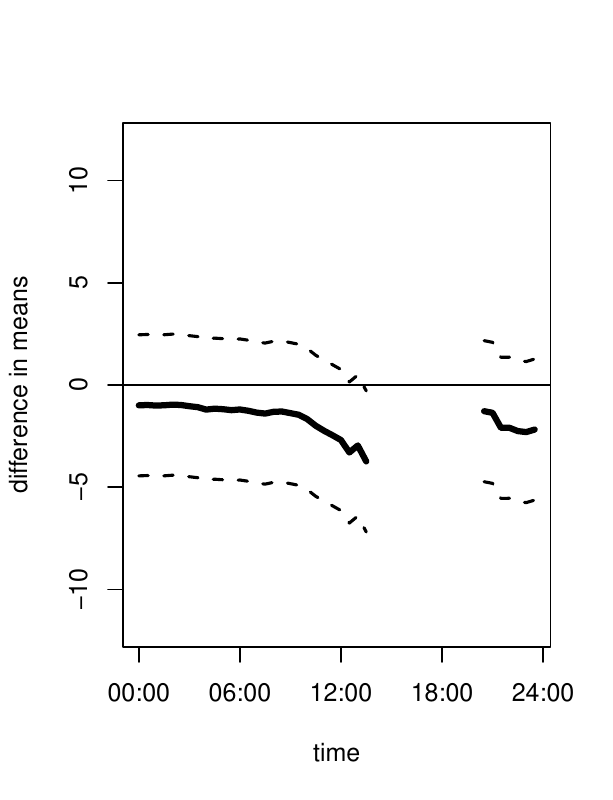}
    \end{subfigure}
\captionof{figure}{Sample of heart rate profiles (first row), electricity data (second row), and temperature data (third row). The first column visualizes the observation patterns, the second column the curves ($\widehat{\mathcal{A}}\colon$~gray, $\widehat{\mathcal{B}}\colon$~black), and the third column the estimated difference in means $\hatmA-\hatmB$  over the domain $I$ along with 95\% simultaneous confidence bands.}
\label{fig:examples}

\end{figure}

\section{Discussion}

We have proposed a novel framework for testing MCAR in case of functional data which is an intricate problem as the number of different observation patterns is generally unbounded. Our method relies on a partition of the observation sets into two groups. These groups can either be chosen systematically or ad hoc by the practitioner. Simulation results indicate that the methodology is reasonably insensitive to the selection of groups provided that the sample size is large enough. For smaller sample sizes, employing an expert-guided deterministic selection approach could improve the performance. Concerning the comparison of groups, we considered approaches based on means and distributions. While the mean comparison allows for informative plots, the distribution test is generally recommended as both theoretical and numerical results suggest that it is consistent against a larger class of alternatives.
Another promising extension of our methodology is to investigate supervised classification methods, rather than unsupervised approaches such as $k$-medians. When suitable training data are available, supervised classification has the potential to increase the power of our testing procedure.

\section*{Funding}

The work of D.~Liebl is funded by the Deutsche Forschungsgemeinschaft (DFG, German Research Foundation) under Germany’s Excellence Strategy – EXC-2047/1 – 390685813. Siegfried Hörmann was supported by the Austrian Science Fund (FWF), project 10.55776/P35520.

\section*{Acknowledgments}

We thank anonymous referees for several valuable remarks, which lead to a significant improvement of an earlier version of this manuscript. The computational results presented have been achieved using the Vienna Scientific Cluster (VSC).

\putbib[ref]
\end{bibunit}

\newcommand{\beginsupplement}{
    \setcounter{table}{0}
    \renewcommand{\thetable}{S\arabic{table}}
    \setcounter{figure}{0}
    \renewcommand{\thefigure}{S\arabic{figure}}
    \setcounter{equation}{0}
    \setcounter{page}{1}
    \setcounter{section}{0}
    \renewcommand{\thesection}{S\arabic{section}}
}

\clearpage
\beginsupplement

\begin{bibunit}
\begin{center}
    \textit{Supplementary Material for:} \\
    \vspace{0.4cm}
    {\Large\textbf{Testing the Missing Completely at Random Assumption for Functional Data}} \\
    \vspace{0.6cm}
    By Maximilian Ofner, Siegfried H\"ormann, David Kraus \& Dominik Liebl \\
    \vspace{0.4cm}
\end{center}

\section{Proofs}

\subsection{Proofs of results in Section~\ref{sec:Clustering}}

\begin{proof}[of Proposition~\ref{prop:ChoiceOfI}]
The set $M_A$ minimizes $M_1 \mapsto \expv[d(M_1,O) \vert A]$. Indeed, if there existed some $M_A^*$ such that 
\begin{equation*}
    \expv[d(M_A^*,O) \vert A] < \expv[d(M_A,O) \vert A],
\end{equation*}
then
\begin{align*}
\expv[\min\{d(M_A^*,O) , d(M_B, O)\}] &< \expv[d(M_A,O) \ind\{A\}] + \expv[d(M_B, O)\ind\{B\}] \\
&= \expv[\min\{d(M_A,O) ,d(M_B, O)\}],
\end{align*}
which results in a contradiction. Since $M_A$ minimizes
\begin{align*}
    \expv[d(M_1, O)\vert A] &= \int_0^1 \expv [\abs{\xi_{M_1}(t)-\xi_O(t)}^2 \vert A] \,  \dt \\
    &= \int_0^1 \xi_{M_1}(t)(1-2\prob(t \in O\vert A))\,\dt + \int_0^1 \prob(t \in O\vert A) \, \dt,
\end{align*}
it must hold that $M_A = \{t\colon \prob(t\in O\vert A) \geq 1/2\}$. Similarly, $M_B = \{t\colon \prob(t\in O\vert B) \geq 1/2\}$ which finally establishes
\begin{equation*}
    \inf_{t \in I} \{\prob(t \in O\vert A), \prob(t \in O\vert B)\} \geq 1/2,
\end{equation*}
for $I = M_A \cap M_B$. 
\end{proof}

\begin{proof}[of Proposition~\ref{prop:ConsistencyCenters}]
The proof is based on Theorem~5.7 of \cite{Vaart1998Asymptotic}. To this end, let $\mathcal{M}_0^{(2)} = \{\{M_1, M_2\}\colon M_1, M_2 \in \mathcal{M}_0\}$ denote the set of unordered pairs of $\mathcal{M}_0$ equipped with the metric
\begin{equation*}
    d^{(2)}(\{M_1, M_2\}, \{M_1', M_2'\}) = \min \{d(M_1, M_1') + d(M_2, M_2'), d(M_1, M_2') + d(M_2, M_1')\}.
\end{equation*}
Since $\mathcal{M}_0$ is $d$-compact, $\mathcal{M}_0^{(2)}$ is $d^{(2)}$-compact. Moreover, the mapping
\begin{equation*}
    \{M_1, M_2\} \mapsto \min\{d(O, M_1), d(O, M_2)\},
\end{equation*}
is $d^{(2)}$-continuous for every $O$ and bounded by one. The conditions of Theorem~5.7 in \cite{Vaart1998Asymptotic} can then be verified using  Example 19.8 in \cite{Vaart1998Asymptotic} and the uniqueness of the minimizer $\{M_A, M_B\}$. The theorem then yields
\begin{equation*}
    d^{(2)}(\{\widehat{M}_A, \widehat{M}_B\}, \{M_A, M_B\}) = o_p(1).
\end{equation*}
Since $d(M_A, M_B)$ is assumed to be positive and $d(\widehat{M}_A, M_A) \leq d(\widehat{M}_B, M_A)$, it follows that $d(\widehat{M}_{A}, M_{A}) +d(\widehat{M}_{B}, M_{B}) = o_p(1)$.
\end{proof}

\begin{proof}[of Lemma~\ref{lem:AmathcalA}]
First, we note that $\prob(O_i \in \mathcal{A} \sym \widehat{\mathcal{A}}) = \prob(O_i \in \mathcal{B} \sym \widehat{\mathcal{B}})$ by properties of the symmetric difference. It then suffices to show that $\prob(O_i \in \mathcal{A} \setminus \widehat{\mathcal{A}}) = o(1)$; the remaining term $\prob(O_i \in \widehat{\mathcal{A}} \setminus \mathcal{A})$ can be treated similarly. For any $\delta>0$, $\prob(O_i \in \mathcal{A} \setminus \widehat{\mathcal{A}})$ is bounded by
\begin{equation*}
 \prob\left(\abs{d(O,M_A) - d(O, M_B)} < 2 \delta\right)+
 \prob\left(d( M_A, \widehat{M}_A)> \delta\right) + \prob\left(d( M_B, \widehat{M}_B)>\delta  \right).
\end{equation*}
The proof then follows from a simple $\varepsilon$-$\delta$ argument.
\end{proof}

\subsection{Proof of Theorem~\ref{thm:CLT_D}}

The proof of Theorem~\ref{thm:CLT_D} is similar to the one of Theorem~\ref{thm:CLT_mu} below but care needs to be taken as the central limit theorem in $D[0,1]$ requires additional assumptions. Our strategy is based on the following result of Hahn \citep[Theorem 2]{hahn1978central}.

\begin{theorem}\label{thm:Hahn}
Let $Y$ be a $D[0,1]$-valued random variable such that $\expv[Y(t)] = 0$ and $\expv[Y^2(t)] < \infty$ for all $t \in [0,1]$. Assume there exist $\gamma_1 > 1/2$ and $\gamma_2 > 1$ such that for all $0 \leq s \leq t \leq u \leq 1$,
\begin{enumerate}[label=(\roman*)]
\item $\expv[(Y(s)-Y(t))^2] \leq C \abs{s - t}^{\gamma_1}$,
\item $\expv[(Y(s)-Y(t))^2(Y(t)-Y(u))^2] \leq C \abs{u - s}^{\gamma_2}$.
\end{enumerate}
Then $Y$ satisfies the central limit theorem in $D[0,1]$ and the limiting Gaussian process has  continuous sample paths.
\end{theorem}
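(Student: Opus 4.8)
The statement is the classical functional central limit theorem in the Skorokhod space, so the plan is to follow the standard two-step route for weak convergence in $D[0,1]$: convergence of finite-dimensional distributions, plus tightness. Let $Y_1, Y_2, \dots$ denote i.i.d.\ copies of $Y$ and write $S_n(t) = n^{-1/2}\sum_{i=1}^n Y_i(t)$ for the normalized partial-sum process, which is the object whose law must be shown to converge. For the finite-dimensional distributions, fix $0 \le t_1 < \dots < t_k \le 1$; the vectors $(Y_i(t_1), \dots, Y_i(t_k))$ are i.i.d., centered, and square-integrable by hypothesis, so the multivariate Lindeberg--L\'evy CLT yields $(S_n(t_1), \dots, S_n(t_k)) \xrightarrow{d} N(0,\Sigma)$ with $\Sigma_{ab} = \expv[Y(t_a)Y(t_b)]$. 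This identifies the candidate Gaussian limit $Z$ with covariance $c(s,t) = \expv[Y(s)Y(t)]$.

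The substantive step is tightness of $\{S_n\}$ in $D[0,1]$, which I would obtain from Billingsley's moment criterion for the Skorokhod topology (Theorem~13.5 in \cite{billingsley2013convergence}): it suffices to find a continuous nondecreasing $F$ and an exponent $\alpha > 1/2$ with
\[
\expv\!\left[(S_n(t)-S_n(s))^2(S_n(u)-S_n(t))^2\right] \le \left[F(u)-F(s)\right]^{2\alpha}, \qquad s \le t \le u,
\]
uniformly in $n$. Writing $\xi_i = Y_i(t)-Y_i(s)$ and $\eta_i = Y_i(u)-Y_i(t)$ (both centered), I expand the left-hand side as $n^{-2}\sum_{i,i',j,j'}\expv[\xi_i\xi_{i'}\eta_j\eta_{j'}]$. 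Independence across copies together with centering annihilates every term in which some copy index appears exactly once, leaving only the all-equal diagonal and the three two-pair matchings, so that
\[
\expv\!\left[(S_n(t)-S_n(s))^2(S_n(u)-S_n(t))^2\right] = \tfrac{1}{n}\expv[\xi^2\eta^2] + \tfrac{n-1}{n}\left(\expv[\xi^2]\expv[\eta^2] + 2\,\expv[\xi\eta]^2\right).
\]

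Each piece is then controlled by the hypotheses: conditions~(i) and~(ii) give $\expv[\xi^2] \le C|t-s|^{\gamma_1}$, $\expv[\eta^2] \le C|u-t|^{\gamma_1}$, and $\expv[\xi^2\eta^2] \le C|u-s|^{\gamma_2}$, while Cauchy--Schwarz bounds $\expv[\xi\eta]^2 \le \expv[\xi^2]\expv[\eta^2]$. Since $|t-s|, |u-t| \le |u-s| \le 1$ and $n^{-1} \le 1$, all four contributions are dominated by $C|u-s|^{\min(\gamma_2, 2\gamma_1)}$, using $x^a \le x^b$ for $x \in [0,1]$ and $a \ge b$. Taking $F(t) = C't$ and $2\alpha = \min(\gamma_2, 2\gamma_1)$, the assumptions $\gamma_1 > 1/2$ and $\gamma_2 > 1$ force $\alpha > 1/2$, so the criterion holds with a continuous majorant. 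Combined with the finite-dimensional convergence, this gives $S_n \Rightarrow Z$ weakly in $D[0,1]$; moreover, because $F$ is continuous, the same moment bound controls the Skorokhod oscillation modulus $w''_\delta$ and forces the maximal-jump functional of the limit to vanish, so $Z$ concentrates on $C[0,1]$.

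I expect the main obstacle to be twofold. First is the bookkeeping in the fourth-moment expansion: one must argue precisely which of the $n^4$ index combinations survive independence and centering and recover the factor $2\,\expv[\xi\eta]^2$ from the two off-diagonal matchings. Second, and more delicate, is the passage from a $D$-valued to a $C$-valued limit: this requires invoking the portion of Billingsley's argument that bounds the Skorokhod modulus $w''_\delta$ (rather than merely the marginal fluctuations), where the continuity of $F$ is exactly what rules out jumps in the limiting process.
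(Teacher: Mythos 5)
The paper offers no proof of this statement to compare against: it is quoted directly from \citet[Theorem 2]{hahn1978central}, so your proposal stands or falls on its own. Its core is in fact sound. The finite-dimensional convergence via the multivariate CLT is immediate, and your tightness computation is correct: with $\xi_i = Y_i(t)-Y_i(s)$ and $\eta_i = Y_i(u)-Y_i(t)$, centering plus independence across copies does annihilate every term in which an index appears exactly once, leaving exactly $\tfrac1n \expv[\xi^2\eta^2] + \tfrac{n-1}{n}\left(\expv[\xi^2]\expv[\eta^2] + 2\expv[\xi\eta]^2\right)$, and conditions (i), (ii) together with Cauchy--Schwarz bound this by $C\abs{u-s}^{\min(\gamma_2,\,2\gamma_1)}$ uniformly in $n$, with exponent exceeding $1$. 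This legitimately yields weak convergence in $D[0,1]$ via the moment criterion of \citet{billingsley2013convergence}.

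The genuine gap is your final step, where you claim that, because $F$ is continuous, the moment bound ``forces the maximal-jump functional of the limit to vanish.'' That implication is false: the bound in question controls the two-sided modulus $w''_\delta$, which is blind to isolated jumps --- this is precisely why $D$-tightness via $w''_\delta$ is compatible with discontinuous limits. Concretely, a centered Poisson process $\tilde N$ satisfies both (i) (with $\gamma_1 = 1$) and (ii) (with $\gamma_2 = 2$, by independence of increments), so the constant sequence $X_n = \tilde N$ satisfies your uniform moment bound with a linear, continuous $F$ and trivially converges in finite-dimensional distributions, yet its limit has jumps. What actually forces continuity of the limit here is Gaussianity, which your argument never invokes: the limit $Z$ is Gaussian with $\expv[(Z(t)-Z(s))^2] = \expv[(Y(t)-Y(s))^2] \leq C\abs{t-s}^{\gamma_1}$, hence by equivalence of Gaussian moments $\expv[\abs{Z(t)-Z(s)}^{2m}] \leq C_m \abs{t-s}^{m\gamma_1}$; choosing $m$ with $m\gamma_1 > 1$, the Kolmogorov--Chentsov criterion provides a continuous modification, and since a c\`adl\`ag process and a continuous modification of it are indistinguishable, $Z$ itself has almost surely continuous paths. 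With that replacement for your last step, the proof is complete.
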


\noindent The following lemma is immediate.
\begin{lemma}\label{lem:XZ}
Define $Y(t) = (X(t)-\expv{[X(t)]})Z(t)$, where $X$ and $Z$ are independent stochastic processes satisfying Assumption~\ref{ass:Y} with parameters $\alpha_X$ and $\alpha_Z$, respectively. Then there exists some constant $C_0 > 0$ such that for all $0 \leq s \leq t \leq u \leq 1$,
\begin{enumerate}[label=(\roman*)]
\item $\expv[(Y(s)-Y(t))^2] \leq C_0 \abs{s-t}^{\beta/2}$,
\item $\expv[(Y(s)-Y(t))^2(Y(t)-Y(u))^2] \leq C_0 \abs{s-u}^{\beta}$,
\end{enumerate}
where $\beta = \min\{\alpha_X, \alpha_Z\}$.
\end{lemma}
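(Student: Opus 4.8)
The plan is to first reduce to the centered process $\widetilde{X}(t) = X(t) - \expv[X(t)]$ and to check that $\widetilde{X}$ inherits Assumption~\ref{ass:Y} with the same exponent $\alpha_X$. Condition~(i) for $\widetilde{X}$ follows from $\expv[(\widetilde{X}(s)-\widetilde{X}(t))^2] = \var(X(s)-X(t)) \le \expv[(X(s)-X(t))^2]$; the fourth-moment bound from $\abs{\expv[X(t)]} \le \expv[X^4(t)]^{1/4} < C$; and condition~(ii) from expanding the centered increment $\widetilde{X}(s)-\widetilde{X}(t) = (X(s)-X(t)) - (\expv[X(s)]-\expv[X(t)])$, where the deterministic drift contributes only lower-order terms already controlled by~(i). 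After this reduction it suffices to establish~(i) and~(ii) for $Y = \widetilde{X}Z$ with $\widetilde{X} \perp Z$, both satisfying Assumption~\ref{ass:Y}.

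For~(i) I would use the product-rule splitting
\begin{equation*}
Y(s) - Y(t) = \widetilde{X}(s)\,(Z(s)-Z(t)) + (\widetilde{X}(s)-\widetilde{X}(t))\,Z(t),
\end{equation*}
square it, and take expectations. Independence of $\widetilde{X}$ and $Z$ factorizes each resulting term into a bounded second moment (e.g.\ $\expv[\widetilde{X}(s)^2], \expv[Z(t)^2] \le C$) times a single-increment second moment bounded via Assumption~\ref{ass:Y}(i), giving $\expv[(Y(s)-Y(t))^2] \le C(\abs{s-t}^{\alpha_X/2} + \abs{s-t}^{\alpha_Z/2})$. Since $\abs{s-t}\le 1$, the smaller exponent dominates and the bound is $\le C_0\abs{s-t}^{\beta/2}$ with $\beta = \min\{\alpha_X,\alpha_Z\}$. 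For~(ii) I would apply the same splitting to both $Y(s)-Y(t)$ and $Y(t)-Y(u)$ and expand the product of squares into \emph{pure} and \emph{mixed} terms. A pure term pairs two increments of the \emph{same} process and, after factorizing over $\widetilde{X}\perp Z$, reduces to a bounded factor such as $\expv[\widetilde{X}(s)^2 \widetilde{X}(t)^2]\le C$ times $\expv[(Z(s)-Z(t))^2(Z(t)-Z(u))^2]\le C\abs{u-s}^{\alpha_Z}$ from Assumption~\ref{ass:Y}(ii); symmetrically the other pure term is $\le C\abs{u-s}^{\alpha_X}$, and both are $\le C\abs{u-s}^{\beta}$.

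The main obstacle is the mixed terms, which pair an increment of one process over $[s,t]$ with an increment of the other over $[t,u]$, each multiplied by an order-one weight; a representative one factorizes (using $\widetilde{X}\perp Z$) as
\begin{equation*}
\expv\big[\widetilde{X}(s)^2(\widetilde{X}(t)-\widetilde{X}(u))^2\big]\cdot \expv\big[(Z(s)-Z(t))^2 Z(u)^2\big].
\end{equation*}
To reach exponent $\beta$ one must extract $\abs{u-t}^{\alpha_X/2}$ from the first factor and $\abs{t-s}^{\alpha_Z/2}$ from the second, since $\abs{u-t}^{\alpha_X/2}\abs{t-s}^{\alpha_Z/2}\le \abs{u-s}^{(\alpha_X+\alpha_Z)/2}\le\abs{u-s}^{\beta}$. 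The difficulty is that the order-one weight (here $\widetilde{X}(s)^2$) is correlated with the increment, and a naive Cauchy--Schwarz only produces the bounded quantity $\expv[(\widetilde{X}(t)-\widetilde{X}(u))^4]$, losing the smallness. I would therefore write $\expv[\widetilde{X}(s)^2 W^2] = \expv[\widetilde{X}(s)^2]\,\expv[W^2] + \cov(\widetilde{X}(s)^2, W^2)$ with $W = \widetilde{X}(t)-\widetilde{X}(u)$: the product term already delivers the desired $\abs{u-t}^{\alpha_X/2}$, so the entire burden sits in the covariance correction. For the motivating examples (Brownian motion, Poisson process) the increment $W$ over $[t,u]$ is independent of $\widetilde{X}(s)$, so the covariance vanishes and the estimate closes; in general it must be controlled through Assumption~\ref{ass:Y}(ii) after re-expressing the weight via increments meeting at the common point $t$. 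This mixed-term bound is the step I expect to absorb essentially all of the effort, the remaining terms being routine.
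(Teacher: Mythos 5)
The paper offers no argument for this lemma---it is simply declared ``immediate''---so there is no official proof to compare yours against; what follows is an assessment of your argument on its own terms. Your reduction to the centered process, your proof of (i) via the splitting $Y(s)-Y(t)=\widetilde{X}(s)(Z(s)-Z(t))+(\widetilde{X}(s)-\widetilde{X}(t))Z(t)$, and your treatment of the pure terms in (ii) are all correct, and you have put your finger on exactly the right difficulty: the mixed terms. However, your proposed repair does not close the gap. Writing $\expv[\widetilde{X}(s)^2W^2]=\expv[\widetilde{X}(s)^2]\expv[W^2]+\cov(\widetilde{X}(s)^2,W^2)$ only renames the problem, since the covariance is exactly as intractable as the original expectation; and ``re-expressing the weight via increments meeting at $t$'' (i.e.\ $\widetilde{X}(s)=\widetilde{X}(t)+(\widetilde{X}(s)-\widetilde{X}(t))$) trades $\expv[\widetilde{X}(s)^2W^2]$ for $\expv[\widetilde{X}(t)^2W^2]$ plus terms covered by Assumption~\ref{ass:Y}(ii)---but $\expv[\widetilde{X}(t)^2(\widetilde{X}(t)-\widetilde{X}(u))^2]$ is again a value-times-increment moment, about which Assumption~\ref{ass:Y} (second moments of increments; fourth moments of products of \emph{adjacent} increments) says nothing. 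H\"older cannot help either: with only fourth marginal moments available, $\expv[V^2W^2]\le(\expv[V^4])^{1/2}(\expv[W^4])^{1/2}$ is the endpoint, and it carries no smallness in $\abs{u-t}$.

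The obstruction appears to be genuine rather than a missing trick. Consider $X(r)=\epsilon\, a_K\ind\{r<2^{-K}U\}$ with $\epsilon$ Rademacher, $U$ uniform on $[0,1]$, $\prob(K=k)\propto k^2 2^{-k\alpha}$ and $a_k=k^{-1}2^{k\alpha/4}$ for some $\alpha\in(1,2)$, all independent. One checks $\expv[X(r)]=0$, $\expv[X(r)^4]\le\sum_k a_k^4\prob(K=k)<\infty$, condition (i) with exponent $\alpha/2$ (split the sum $\sum_k 2^{-k\alpha/2}\min\{1,(t-s)2^k\}$ at $2^k\approx(t-s)^{-1}$), while condition (ii) holds trivially because the two adjacent increments are never simultaneously nonzero. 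Yet for $t=2^{-m-1}$ and $u=2^{-m}$ one gets $\expv[X(t)^2(X(t)-X(u))^2]\gtrsim m^{-2}$, which is not $O(\abs{u-t}^{\gamma})$ for any $\gamma>0$. Pairing this $X$ with the independent process $Z(r)=\ind\{r\ge V\}$, $V$ uniform (which satisfies Assumption~\ref{ass:Y} with $\alpha_Z=2$), and taking $s=0$, the left-hand side of conclusion (ii) is of order at least $2^{-m}m^{-2}$ while the claimed bound is $C_0 2^{-m\alpha}$ with $\alpha>1$, so the conclusion itself seems to fail. The upshot is that the lemma cannot be derived from Assumption~\ref{ass:Y} alone. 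It does become immediate under a mild strengthening such as $\expv[(X(s)-X(t))^4]\le C\abs{t-s}^{\alpha/2}$ (Cauchy--Schwarz then finishes your mixed terms), or under structure present in all of the paper's examples (increments independent of the past, or uniformly bounded paths as for the indicator-based $Z$ in \eqref{def:Z}). I would recommend proving the lemma under one of these added hypotheses rather than trying to complete the argument as it stands.
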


\begin{proof}[of Theorem~\ref{thm:CLT_D}]

By Lemma~\ref{lem:hatA}(ii), $\sqrt{n}\,\norm{\widehat{\mu}_{A}-\widehat{\mu}_{\widehat{A}}}_\infty + \sqrt{n}\,\norm{\widehat{\mu}_{B}-\widehat{\mu}_{\widehat{B}}}_\infty = o_p(1)$. It thus suffices to derive the limiting distribution of $\sqrt{n} \, (\widehat{\mu}_A-\widehat{\mu}_B)$. Observe $\expv[(X(t)-\mu_A(t))\xi_O(t)\ind\{O \in \mathcal{A}\}] = 0$. Therefore, the law of large numbers for $D[0,1]$-valued random variables (Theorem~1 in \cite{rao1963law}) implies
\begin{equation}\label{eq:LLN_X}
\sup_{t \in [0,1]} \Big\vert \frac{1}{n} \sum_{i=1}^n (X_i(t)-\mu_A(t))\xi_{O_i}(t)\ind\{O_i \in \mathcal{A}\} \Big\vert = o_p(1).
\end{equation}
Furthermore, since $\xi_O$ satisfies Assumption~\ref{ass:Y} and thus the conditions of Theorem~\ref{thm:Hahn},
\begin{equation}\label{eq:LLN_O}
\sup_{t \in [0,1]} \abs{\widehat{p}_A(t) - p_A(t)} = \sup_{t \in [0,1]} \Big\vert \frac{1}{n} \sum_{i=1}^n \xi_{O_i}(t)\ind\{O_i \in A\} - \prob(t\in O, A) \Big\vert = O_p(1/\sqrt{n}).
\end{equation}
From \eqref{eq:LLN_X} and \eqref{eq:LLN_O}, it follows that
\begin{equation*}
\sqrt{n} (\widehat{\mu}_A(t) -\mu_A(t)) = \frac{1}{\sqrt{n}} \sum_{i = 1}^n \frac{(X_i(t)-\mu_A(t))\xi_{O_i}(t)\ind\{O_i \in \mathcal{A}\}}{p_A(t)} + o_p(1).
\end{equation*}
Analogous results hold if $\mathcal{A}$ is replaced with $\mathcal{B}$. Since $\mu_A = \mu_B = \mu$ under MCAR,
\begin{equation*}
\sqrt{n} (\widehat{\mu}_A(t) - \widehat{\mu}_B(t)) = \frac{1}{\sqrt{n}} \sum_{i = 1}^n (X_i(t)-\mu(t))\xi_{O_i}(t)\left(\frac{\ind\{O_i \in \mathcal{A}\}}{p_A(t)} - \frac{\ind\{O_i \in \mathcal{B}\}}{p_B(t)}\right) + o_p(1),
\end{equation*}
where the remainder term is uniform in $t \in [0,1]$. Now define $Y(t) = (X(t)-\mu(t)) Z(t)$ where
\begin{equation}
Z(t) = \xi_O(t)\left(\frac{\ind\{O \in \mathcal{A}\}}{p_A(t)} - \frac{\ind\{O\in \mathcal{B}\}}{p_B(t)}\right), \qquad t \in [0,1].\label{def:Z}
\end{equation}
Let $X$ and $\xi_O$ satisfy Assumption~\ref{ass:Y} with parameters $\alpha_X$ and $\alpha_O$, respectively. Using Assumption~\ref{ass:Missing}, it can then be shown that $Z$ also satisfies Assumption~\ref{ass:Y} with parameter $\alpha_O$. Since $X$ and~$Z$ are independent under MCAR, Lemma~\ref{lem:XZ} implies that $Y$ satisfies the conditions of Theorem~\ref{thm:Hahn} with $\gamma_1 = \min\{\alpha_X/2, \alpha_O/2\}$ and $\gamma_2 = \min\{\alpha_X, \alpha_O\}$. Consequently, $\sqrt{n} (\widehat{\mu}_A -\widehat{\mu}_B)$ converges weakly in $D[0,1]$ to a Gaussian process with covariance~\eqref{eq:CovZ}. Since the supremum is a continuous functional on $D[0,1]$, the continuous mapping theorem finishes the proof.
\end{proof}

\begin{lemma}\label{lem:hatA}
Suppose that the assumptions of Theorem~\ref{thm:CLT_D} are satisfied. Then,
\begin{enumerate}[label=(\roman*)]
\item $\norm{\widehat{\mu}_{A}-\widehat{\mu}_{\widehat{A}}}_\infty = o_p(1)$,
\item if MCAR is true, $\norm{\widehat{\mu}_{A}-\widehat{\mu}_{\widehat{A}}}_\infty = o_p(1/\sqrt{n})$.
\end{enumerate}
\end{lemma}

\begin{proof}
We can rewrite the difference of empirical means as
\begin{align*}
&\widehat{\mu}_{A}(t)-\widehat{\mu}_{\widehat{A}}(t)\\ &\qquad =\frac{\frac{1}{n}\sum_{i = 1}^n (X_i(t)-\mu(t)) \xi_{O_i}(t) \ind\{O_i \in \mathcal{A}\}}{\widehat{p}_A(t)}-\frac{\frac{1}{n}\sum_{i = 1}^n (X_i(t)-\mu(t)) \xi_{O_i}(t) \ind\{O_i \in \widehat{\mathcal{A}}\}}{\widehat{p}_{\widehat{A}}(t)},
\end{align*}
where $\mu(t) = \expv[X(t)]$. To simplify the notation, set $\delta_i = \ind\{O_i \in \mathcal{A}\}-\ind\{O_i \in \widehat{\mathcal{A}}\}$ and define
\begin{align*}
S_n(t) = \frac{1}{n}\sum_{i = 1}^n (X_i(t)-\mu(t)) \xi_{O_i}(t) \ind\{O_i \in \mathcal{A}\}, && S_n'(t) = \frac{1}{n}\sum_{i = 1}^n (X_i(t)-\mu(t)) \xi_{O_i}(t) \ind\{O_i \in \widehat{\mathcal{A}}\}.
\end{align*}
Then,
\begin{equation}\label{eq:hatmuA-hatmuAhat}
\widehat{\mu}_{A}(t)-\widehat{\mu}_{\widehat{A}}(t) = \frac{S_n(t) (\widehat{p}_{\widehat{A}}(t)-\widehat{p}_{A}(t)) + (S_n(t)-S_n'(t))\widehat{p}_{A}(t)}{\widehat{p}_{\widehat{A}}(t)\widehat{p}_{A}(t)}.
\end{equation}
Note that
\begin{equation*}
\expv\left[ \norm{\widehat{p}_A - \widehat{p}_{\widehat{A}}}_{\infty}\right] \leq \frac{1}{n} \sum_{i=1}^n \expv\left[\norm{\xi_{O_i} \delta_i}_\infty\right] =\prob(O_1 \in \widehat{\mathcal{A}}  \, \sym \,  \mathcal{A}) = o(1).
\end{equation*}
Together with \eqref{eq:LLN_O} and Assumption~\ref{ass:Missing}, it then follows that the denominator in \eqref{eq:hatmuA-hatmuAhat} is asymptotically bounded from below.

For (i), it thus suffices to note that $\norm{S_n}_\infty = O_p(1)$ and
\begin{equation*}
    \expv\left[\norm{S_n-S_n'}_\infty\right] = \expv\left[\norm{\frac{1}{n}\sum_{i=1}^n (X_i-\mu) \xi_{O_i} \delta_i}_\infty \right] \leq \expv \left[\norm{X-\mu}^2_\infty\right]^{1/2} \prob(O_1 \in \widehat{\mathcal{A}}  \, \sym \,  \mathcal{A})^{1/2} = o(1),
\end{equation*}
where we have used the Cauchy-Schwarz inequality and the assumptions of the lemma.

For (ii), an argument based on Hahn's central limit theorem yields $\sqrt{n}\,\norm{S_n}_\infty = O_p(1)$. Invoking~\eqref{eq:hatmuA-hatmuAhat}, it suffices to show that $\sqrt{n} \,\norm{S_n-S_n'}_\infty = \norm{\frac{1}{\sqrt{n}}\sum_{i = 1}^n (X_i-\mu) \xi_{O_i}\delta_i}_\infty = o_p(1)$. To this end, we invoke Theorem~13.5 in \cite{BillingsleyPatrick1979Pam} and verify the conditions (13.11) as well as (13.14) of the theorem to show that $\sqrt{n}(S_n-S_n')$ converges weakly to the zero function. Set $Y = (X-\mu) \xi_O$, and note that for $0 \leq t_1 \leq t_2 \leq \dots \leq t_K \leq 1$, the Markov inequality implies
\begin{equation*}
\prob\left( \sum_{k=1}^K \abs{\frac{1}{\sqrt{n}}\sum_{i = 1}^n Y_i(t_k)\delta_i} > \varepsilon \right)  \leq \frac{K}{\varepsilon^2n}\sum_{k=1}^K \expv \left[ \left(\sum_{i = 1}^n (X_i(t_k)-\mu(t_k)) \xi_{O_i}(t_k)\delta_i\right)^2\right].
\end{equation*}
An argument based on conditioning shows that the mixed terms in the above sum have zero expectation under MCAR. Therefore, we obtain for any $k \leq K$,
\begin{align*}
\frac{1}{n}\,\expv \left[ \left(\sum_{i = 1}^n (X_i(t_k)-\mu(t_k)) \xi_{O_i}(t_k)\delta_i\right)^2\right] &\leq \expv\left[(X(t_k)-\mu(t_k))^2 \right] \expv\left[\abs{\delta_1} \right] \\
&= \expv\left[(X(t_k)-\mu(t_k))^2 \right] \prob(O_1 \in \widehat{\mathcal{A}}  \, \sym \,  \mathcal{A}) = o(1),
\end{align*}
where we again used independence between $(X_i)_{i=1}^n$ and $(O_i)_{i=1}^n$ under MCAR. Combining the above, shows the condition (13.11). As for the condition~(13.14),  note that for $0 \leq s \leq t\leq u \leq 1$,
\begin{align*}
&\expv \left[ \left(\frac{1}{\sqrt{n}}\sum_{i = 1}^n (Y_i(s)-Y_i(t))\delta_i\right)^2\left(\frac{1}{\sqrt{n}}\sum_{i = 1}^n (Y_i(t)-Y_i(u))\delta_i\right)^2\right] \\
&\qquad \leq \expv\left[(Y(s)-Y(t))^2(Y(t)-Y(u))^2\right] + 3 \,\expv\left[(Y(s)-Y(t))^2\right]\expv\left[(Y(t)-Y(u))^2\right]\\
&\qquad \leq C_0 \abs{u-s}^\beta
\end{align*}
with $\beta = \min\{\alpha_X, \alpha_O\} > 1$ where we have used Lemma~\ref{lem:XZ} and the fact that $X$ and $\xi_O$ satisfy Assumption~\ref{ass:Y}. This shows condition (13.14) and Theorem~13.5 in \cite{BillingsleyPatrick1979Pam} then finishes the proof of the lemma.
\end{proof}

\subsection{Proof of Theorem~\ref{thm:CLT_mu}}

Let $H = L^2([0,1]\times \Xi, \lambda \otimes \nu)$ be the space of two-dimensional functions which are square integrable with respect to the product measure~$\lambda \otimes \nu$, where $\lambda$ denotes the Lebesgue measure on~$[0,1]$. For $f \in H$, define $\norm{f}^2_H = \int_{[0,1]\times \Xi} f(t,z)^2 \,\dF(t,z)$, and for $f \in L^2[0,1]$, set $\norm{f}_{L^2}^2 = \int_0^1 f(t)^2 \, \dt$.

\begin{proof}[of Theorem~\ref{thm:CLT_mu}]
Similar to Lemma~\ref{lem:hatA}, it can be shown that $\sqrt{n}\,\norm{ \widehat{F}_{\widehat{A}}- \widehat{F}_{A}}_H + \sqrt{n}\,\norm{ \widehat{F}_{\widehat{B}}- \widehat{F}_{B}}_H = o_p(1)$. It thus suffices to derive the limiting distribution of $\sqrt{n}\,\norm{ \widehat{F}_{A}- \widehat{F}_{B}}_H$. To this end, note that $\expv\left[(\ind\{X_i(t) \leq z\}-F_A(t,z)) \xi_{O_i}(t) \ind\{O_i \in \mathcal{A}\} \right] = 0$. The central limit theorem of Hilbert-valued random variables (Theorem~2.7 in \cite{bosq2000linear}) implies
\begin{equation}\label{eq:LLN_H}
\frac{1}{n} \sum_{i=1}^n (\ind\{X_i(t) \leq z\}-F_A(t,z))  \xi_{O_i}(t)\ind\{O_i \in \mathcal{A}\}  = O_p(1/\sqrt{n}),
\end{equation}
in the norm $\norm{\cdot}_H$. 
From \eqref{eq:LLN_O} and \eqref{eq:LLN_H}, it follows that
\begin{equation*}
\sqrt{n} \, (\widehat{F}_A(t,z) - F_A(t,z)) = \frac{1}{\sqrt{n}} \sum_{i = 1}^n \frac{(\ind\{X_i(t) \leq z\}-F_A(t,z)) \xi_{O_i}(t) \ind\{O_i \in \mathcal{A}\}}{p_A(t)} + o_p(1).
\end{equation*}
Analogous results hold if $\mathcal{A}$ is replaced with $\mathcal{B}$. Now, since $\FA = \FB = F$ under MCAR,
\begin{align*}
&\sqrt{n} \, (\widehat{F}_A(t,z) - \widehat{F}_B(t,z)) \\
&\qquad = \frac{1}{\sqrt{n}} \sum_{i = 1}^n (\ind\{X_i(t) \leq z\}-F(t,z))\xi_{O_i}(t)\left(\frac{\ind\{O_i \in \mathcal{A}\}}{p_A(t)} - \frac{\ind\{O_i \in \mathcal{B}\}}{p_B(t)}\right) + o_p(1).
\end{align*}
Another application of the central limit theorem finishes the proof of the theorem.
\end{proof}

\subsection{Proofs of results in Section~\ref{sec:Consistency}}

\begin{proof}[of Theorem~\ref{thm:power}]
We only show (i). By the inverse triangle inequality,
\begin{align*}
T_\mu &= \sqrt{n} \, \norm{\hatmA-\hatmB}_\infty \\
&\geq \sqrt{n} \left(\norm{\mu_A-\mu_B}_\infty - \norm{(\widehat{\mu}_A-\widehat{\mu}_B)-(\mu_A-\mu_B)}_\infty - \norm{\hatmA - \widehat{\mu}_A}_\infty -  \norm{\hatmB - \widehat{\mu}_B}_\infty \right).
\end{align*}
Lemma~\ref{lem:hatA}(i) shows that $\norm{\hatmA - \widehat{\mu}_A}_\infty +  \norm{\hatmB - \widehat{\mu}_B}_\infty = o_p(1)$. Using the law of large numbers in the Skorokhod space $D[0,1]$ \citep[Theorem~1]{rao1963law}, we get $\norm{(\widehat{\mu}_A-\widehat{\mu}_B)-(\mu_A-\mu_B)}_\infty = o_p(1)$. The assumption $\norm{\mu_A-\mu_B}_\infty > 0$ then implies $T_\mu \xrightarrow{p} \infty$ as $n \to \infty$. 
\end{proof}

\noindent Consider again the mean-based test statistic $T_\mu$. In case of a fixed clustering rule given by $\widehat{\mathcal{A}} = \mathcal{A}$ and $\widehat{\mathcal{B}} = \mathcal{B}$, the quantities $\norm{\hatmA - \widehat{\mu}_A}_\infty$ and $\norm{\hatmB - \widehat{\mu}_B}_\infty$ vanish. Under the additional assumption that the product $X\xi_{O}$ satisfies Assumption~\ref{ass:Y}, an argument based on the CLT then shows that $\norm{(\widehat{\mu}_A-\widehat{\mu}_B)-(\mu_A-\mu_B)}_\infty = O_p(1/\sqrt{n})$. This implies that $T_\mu$ has power against local alternatives of the form $\norm{\mu_{A,n}-\mu_{B,n}}_\infty = \frac{c}{\sqrt{n}}$ for  $c > 0$.

\begin{proof}[of Proposition~\ref{prop:ConsistencyDist}]
    Due to continuity in (i), it suffices to show that $F_A \neq F_B$ in the single point~$(t_0, z_0)$. For a contrapositive proof, assume that $F_A(t_0, z_0) = F_B(t_0, z_0)$ which  in turn implies $\prob(X(t_0) \leq z_0\vert t_0 \in O) = F_A(t_0,z_0)$. Using the definition of $F_A(t,z)$, we get
    \begin{align*}
    \prob(X(t_0) \leq z_0 \vert t_0 \in O) = \prob(X(t_0) \leq z_0 \vert t_0 \in O, O \in \mathcal{A}).
    \end{align*}
    This means $\{X(t_0) \leq z_0\} \perp\!\!\!\!\perp \{O \in \mathcal{A}\} \vert \{t_0 \in O\}$.
\end{proof}

\begin{proof}[of Proposition~\ref{prop:consistency}]
For a contrapositive proof, assume that
\begin{equation*}
\norm{\FA-\FB}_H^2 = \int_0^1 \int_{\Xi}(\FA(t,z)-\FB(t,z))^2 \dnu(z)\,\dt = 0.
\end{equation*}
Hence, $\int_{\Xi}(\FA(t,z)-\FB(t,z))^2 \dnu(z) = 0$ for almost all $t\in[0,1]$. Because $z \mapsto \FA(t,z)-\FB(t,z)$ is right-continuous and $\nu$ is supported on $\Xi$, we get $\FA(t,z) = \FB(t,z)$ for all $z \in \textup{int}(\Xi)$. Since it holds $X(t) \in \textup{int}(\Xi)$ almost surely, the equality of distributions implies $\mA(t)=\mB(t)$. This is true for almost all $t \in [0,1]$ and therefore $\norm{\mA-\mB}_{\infty}=0$ by continuity of $t \mapsto \mu_A(t)-\mu_B(t)$.
\end{proof}

For the proof of Proposition~\ref{prop:BrownianMean}, we define the stopping time
\begin{equation*}
\tau = \tau_{ab} = \inf\{t \geq 0\colon B(t) \in \{a,b\}\}.
\end{equation*}
In addition, we denote by $\Phi$ and $\varphi$ the cdf and pdf of a standard normal distribution, respectively. The proof of Proposition~\ref{prop:BrownianMean} is then based on the following lemmas.

\begin{lemma}\label{lem:Etau}
For $a < 0 < b$, it holds that
\begin{equation*}
\expv[B(1) \vert \tau \geq 1] = \expv[B(\tau) \vert \tau \geq 1].
\end{equation*}
\end{lemma}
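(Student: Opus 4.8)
The plan is to recognize the stopped process $t \mapsto B(t \wedge \tau)$ as a bounded, hence uniformly integrable, martingale and then to exploit the fact that $\{\tau \geq 1\}$ is $\mathcal{F}_1$-measurable, so that conditioning on this event commutes with the martingale projection onto $\mathcal{F}_1$. The whole argument reduces to one projection identity plus the tower property.

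First I would record the structural facts. Since $(a,b)$ is a bounded interval containing $0$, the exit time satisfies $\tau < \infty$ almost surely, and the stopped path never leaves $[a,b]$; hence $N_t := B(t \wedge \tau)$ is a continuous martingale bounded by $\max(\abs{a}, b)$. Bounded convergence then gives $N_t \to B(\tau)$ in $L^1$ as $t \to \infty$, so $(N_t)$ is uniformly integrable and closed on the right by the terminal variable $B(\tau)$. The optional sampling theorem therefore yields the projection identity
\begin{equation*}
\expv[B(\tau) \mid \mathcal{F}_1] = B(1 \wedge \tau).
\end{equation*}

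Second, I would observe that $\{\tau \geq 1\} = \{\tau < 1\}^c \in \mathcal{F}_1$ because $\tau$ is a stopping time, and that on this event $1 \wedge \tau = 1$ holds by definition, so that $B(1 \wedge \tau) = B(1)$ there. Multiplying the displayed identity by $\ind\{\tau \geq 1\}$, taking expectations, and using the tower property gives
\begin{equation*}
\expv[B(\tau)\ind\{\tau \geq 1\}] = \expv[B(1 \wedge \tau)\ind\{\tau \geq 1\}] = \expv[B(1)\ind\{\tau \geq 1\}].
\end{equation*}
Dividing by $\prob(\tau \geq 1)$, which is clearly positive, delivers the asserted equality of conditional expectations.

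The main obstacle is justifying the projection identity $\expv[B(\tau) \mid \mathcal{F}_1] = B(1 \wedge \tau)$ rigorously: one must verify that $(N_t)$ is uniformly integrable and closed on the right by $B(\tau)$, which is precisely where boundedness of the stopped path and the almost-sure finiteness of $\tau$—both consequences of $(a,b)$ being a bounded interval—enter. Everything else is a direct application of the tower property together with the $\mathcal{F}_1$-measurability of $\{\tau \geq 1\}$, and since $1 \wedge \tau = 1$ identically on that event, no null-set considerations are required.
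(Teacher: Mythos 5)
Your proof is correct, and it takes a somewhat different route from the paper's. The paper works on the complementary event: starting from $\expv[B(1)]=0$ it writes $\expv[B(1)\ind\{\tau\geq 1\}]=-\expv[B(1)\ind\{\tau<1\}]$, splits $B(1)=B(\tau)+(B(1)-B(\tau))$ on $\{\tau<1\}$, kills the increment by conditioning on $\mathcal{F}_\tau$ via the strong Markov property, and then invokes $\expv[B(\tau)]=0$ (optional stopping) to convert $-\expv[B(\tau)\ind\{\tau<1\}]$ into $\expv[B(\tau)\ind\{\tau\geq 1\}]$. You instead work directly on $\{\tau\geq 1\}$: the stopped process $N_t=B(t\wedge\tau)$ is a bounded, hence uniformly integrable, martingale closed by $B(\tau)$, so $\expv[B(\tau)\mid\mathcal{F}_1]=B(1\wedge\tau)$, and since $\{\tau\geq 1\}\in\mathcal{F}_1$ and $B(1\wedge\tau)=B(1)$ on that event, the identity follows in one line from the tower property. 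Your argument is arguably cleaner---it needs only the single projection identity and avoids the separate appeals to $\expv[B(1)]=0$ and $\expv[B(\tau)]=0$---while the paper's version makes the role of the strong Markov property explicit and sets up the quantity $\expv[B(\tau)\ind\{\tau\geq 1\}]$ in exactly the form used later in the proof of Proposition~\ref{prop:BrownianMean} via Lemma~\ref{lem:probs}. Both hinge on the same martingale machinery (boundedness of the stopped path on $[a,b]$ and almost-sure finiteness of $\tau$), so neither buys extra generality; the difference is purely organizational.
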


\begin{proof}
Let $\mathcal{F}_\tau$ be the $\sigma$-algebra of events observable at time $\tau$. By $\expv[B(1)] = 0$ and the strong Markov property, we get
\begin{align*}
\expv[B(1) \ind \{\tau \geq 1\}] &= - \expv[B(1) \ind\{\tau < 1\}] \\
&= - \expv[B(\tau) \ind\{\tau < 1\}] - \expv[(B(1)-B(\tau)) \ind\{\tau < 1\}] \\
&= - \expv[B(\tau) \ind\{\tau < 1\}] - \expv[\expv[B(1)-B(\tau)\vert \mathcal{F}_\tau] \ind\{\tau < 1\}]\\
&= - \expv[B(\tau) \ind\{\tau < 1\}] + 0.
\end{align*}
The result of the proposition then follows from $\expv[B(\tau)] = 0$ as a consequence of optimal stopping, and the definition of conditional expectations.
\end{proof}

\begin{lemma}\label{lem:probs}
For $a < 0 < b$, it holds that
\begin{align*}
&\prob(B(\tau) = a, \tau \geq 1) = -2\sum_{k=1}^\infty \left(\Phi(a+2k(b-a))-\Phi(-a + 2(k-1)(b-a))\right) + \frac{b}{b-a},\\
&\prob(B(\tau) = b, \tau \geq 1) = -2\sum_{k=1}^\infty \left(\Phi(-b+2k(b-a))-\Phi(b + 2(k-1)(b-a))\right) + \frac{-a}{b-a}.
\end{align*}
\end{lemma}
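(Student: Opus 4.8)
The plan is to reduce the computation to a single Gaussian integral against the survival density of the Brownian motion in $(a,b)$, which I would evaluate by the method of images. Throughout write $L=b-a$ for the width of the interval.

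First I would use the strong Markov property at time $1$. On the event $\{\tau\ge1\}$ the path has stayed in $(a,b)$ up to time $1$, and conditionally on $B(1)=y$ the residual motion is a fresh Brownian motion started at $y$, which is absorbed at $a$ before $b$ with the harmonic ruin probability $h(y)=(b-y)/L$. Hence
\[
\prob(B(\tau)=a,\ \tau\ge1)=\int_a^b q_1(y)\,h(y)\,dy,
\]
where $q_1$ is the sub-probability density of $B(1)$ on $\{\tau\ge1\}$. As a separate check on the constant term, optional stopping applied to the bounded martingale $B(t\wedge\tau)$ gives $\expv[B(\tau)]=0$, whence $\prob(B(\tau)=a)=b/L$ and $\prob(B(\tau)=b)=-a/L$; these are exactly the constants appearing in the two identities.

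Second, I would write down $q_1$ via the method of images for the two absorbing barriers $a$ and $b$. Reflecting the source at $0$ repeatedly across $a$ and $b$ produces
\[
q_1(y)=\sum_{n\in\mathbb Z}\bigl(\varphi(y-2nL)-\varphi(y-2a-2nL)\bigr),\qquad y\in(a,b),
\]
and one checks directly, using that $\varphi$ is even, that the two families of images cancel at $y=a$ and at $y=b$, so that $q_1$ satisfies the correct absorbing boundary conditions. Substituting this series into the integral and integrating term by term, each image contributes Gaussian integrals evaluated by $\int\varphi=\Phi$ and $\int y\,\varphi(y)\,dy=-\varphi(y)$; this produces values of $\Phi$ and of $\varphi$ at the lattice points $a+2nL$, $b+2nL$ and their reflections $-a+2nL$. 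Equivalently, one could bypass the survival density and apply the double-barrier reflection principle directly to the first-passage event, which reproduces the same alternating image series together with the characteristic prefactor $2$.

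The main obstacle is purely the bookkeeping of a doubly infinite, only conditionally convergent image series. I would first justify the interchange of summation and integration through Gaussian tail bounds: since consecutive images are spaced by $2L$, the series converges geometrically and is dominated uniformly on $[a,b]$. I would then regroup the resulting $\Phi$-terms into the one-sided sum over $k\ge1$ displayed in the statement, and verify that all the density ($\varphi$) contributions, together with the surplus $\Phi$-terms, telescope and cancel, leaving precisely $-2\sum_{k\ge1}\bigl(\Phi(a+2kL)-\Phi(-a+2(k-1)L)\bigr)$ plus the constant $b/L$, in agreement with the gambler's ruin value found above. The second identity, for $\prob(B(\tau)=b,\ \tau\ge1)$, then requires no new computation: the reflection $B\mapsto-B$ turns the motion in $(a,b)$ into a Brownian motion in $(-b,-a)$ and interchanges the two barriers, so substituting $a\mapsto-b$ and $b\mapsto-a$ in the first identity yields the second; as a final consistency check the two constants sum to one and $\prob(B(\tau)=a,\tau\ge1)+\prob(B(\tau)=b,\tau\ge1)=\prob(\tau\ge1)$.
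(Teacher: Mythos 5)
Your approach is correct but genuinely different from the paper's: the paper proves this lemma in one line by citing the tabulated first-exit densities (3.0.6)(a) and (3.0.6)(b) of Borodin and Salminen's handbook and integrating them over $v \geq 1$, whereas you give a self-contained derivation. Your ingredients all check out: conditioning at time $1$ via the strong Markov property reduces the problem to $\int_a^b q_1(y)\,\frac{b-y}{b-a}\,dy$ with $q_1$ the killed transition density; your image series $q_1(y)=\sum_{n\in\mathbb Z}\bigl(\varphi(y-2nL)-\varphi(y-2a-2nL)\bigr)$ is the correct two-barrier formula and does vanish at $y=a$ and $y=b$; the gambler's-ruin constants $b/L$ and $-a/L$ from optional stopping are exactly the additive constants in the statement; and the reflection $B\mapsto -B$ (i.e.\ the substitution $a\mapsto -b$, $b\mapsto -a$, which fixes $L$) maps the first identity precisely onto the second, so only one of the two formulas needs to be computed. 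The one place your write-up stops short of a proof is the central bookkeeping step: you assert, but do not execute, the term-by-term integration and the cancellation of the $\varphi$-contributions and surplus $\Phi$-terms. That step does go through --- each image contributes $(b-c)\bigl(\Phi(b-c)-\Phi(a-c)\bigr)+\varphi(b-c)-\varphi(a-c)$ with $c$ the image center, the $\varphi$-terms cancel in pairs after using $\varphi(b-2a-2nL)=\varphi(b+2(n-1)L)$ and $\varphi(-a-2nL)=\varphi(a+2nL)$, and the surviving $\Phi$-combinations regroup into the stated one-sided sum --- but as written this is a plan plus consistency checks rather than a completed calculation. Your approach buys transparency (one sees where the theta-type series and the constants come from, and the $k=1$ term visibly degenerates to the reflection-principle bound $2\Phi(a)$ as $b\to\infty$), at the cost of an algebraic regrouping that the paper sidesteps entirely by deferring to a reference.
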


\begin{proof}
The assertion follows from integrating the formulas (3.0.6)(a) and (3.0.6)(b) given in \cite{Borodin1996handbook}.
\end{proof}

\begin{proof}[of Proposition~\ref{prop:BrownianMean}]

By definition, we have that
\begin{equation*}
\mu_A(t) = \expv[B(t) \vert B(s) \in (a,b) \,\forall s], \qquad \mu_B(t) = \expv[B(t) \vert B(t) \in (a,b), \exists s\, B(s) \not\in (a,b)].
\end{equation*}
Note that by continuity of $t \mapsto \mA(t)-\mB(t)$, it actually suffices to show that~$\mA(1) \neq \mB(1)$. Let us also define $\mu_C(t) = \expv[B(t)\vert B(t) \in (a,b)]$ and note that it equals
\begin{equation*}
\mu_A(t) \prob(B(s) \in (a,b)\, \forall s \vert B(t) \in (a,b)) + \mu_B(t) (1-\prob(B(s) \in (a,b)\, \forall s \vert B(t) \in (a,b))).
\end{equation*}
Hence, $\mu_C(1)$ is a convex combination of $\mu_A(1)$ and $\mu_B(1)$ with nonzero weights. Therefore, $\mu_A(1) \neq \mu_B(1)$ if and only if $\mu_A(1) \neq \mu_C(1)$ which we show in the following. The conditional mean $\mA$ can be written as
\begin{equation*}
\mA(t) = \expv[B(t) \vert B(s) \in (a,b) \,\forall s] = \expv[B(t)\vert \tau \geq 1].
\end{equation*}
Due to Lemma~\ref{lem:Etau}, this implies $\mA(1) = \expv[B(\tau)\vert \tau \geq 1]$. Furthermore, note that $\mu_C$ relates to the mean of a truncated normal distribution. It can easily be shown that
\begin{equation*}
\mu_C(1) = \expv[B(1)\vert B(1) \in (a,b)] = \frac{\varphi(a)-\varphi(b)}{\Phi(b)+\Phi(-a)-1}.
\end{equation*}
We need to prove
\begin{equation*}
\expv[B(\tau)\vert \tau \geq 1] \neq \frac{\varphi(a)-\varphi(b)}{\Phi(b)+\Phi(-a)-1}.
\end{equation*}
Set $a = -b + \varepsilon$ for $b$ large and $\varepsilon \in (0,1)$ sufficiently small. To prove the above inequality, we will show that
\begin{equation}\label{eq:Inequality}
\expv[B(\tau)\ind\{\tau\geq 1\}] > \frac{\varphi(a)-\varphi(b)}{\Phi(b)+\Phi(-a)-1}.
\end{equation}
\begin{itemize}
\item Regarding the lhs of \eqref{eq:Inequality}, Lemma~\ref{lem:probs} implies,
\begin{align*}
&\expv[B(\tau)\ind\{\tau\geq 1\}]\\
 &\quad = a \prob(B(\tau)=a, \tau \geq 1) + b \prob(B(\tau)=b, \tau \geq 1) \\
&\quad = -2 \sum_{k=1}^\infty \Big\lbrace a(\Phi(a+2k(b-a))-\Phi(-a + 2(k-1)(b-a))) \\
&\qquad \qquad + b( \Phi(-b+2k(b-a))-\Phi(b + 2(k-1)(b-a))) \Big\rbrace.
\end{align*}
For $a = -b + \varepsilon$, we get
\begin{align*}
\expv[B(\tau)\ind\{\tau\geq 1\}] &>2b(\Phi(3b-\varepsilon)-\Phi(b-\varepsilon)-\Phi(3b-2\varepsilon)+\Phi(b))\\
&\qquad -2\varepsilon\sum_{k=1}^\infty\left\lbrace \Phi(-b+\varepsilon+2k(2b-\varepsilon))-\Phi(b-\varepsilon+2(k-1)(2b-\varepsilon))\right\rbrace .
\end{align*}
A Taylor expansion yields
\begin{align*}
\expv[B(\tau)\ind\{\tau\geq 1\}]&> 2\varepsilon b(\varphi(b)+\varphi(3b)) - 2\varepsilon\sum_{k=1}^\infty\left\lbrace \Phi(-b+4kb)-\Phi(b+4(k-1)b)\right\rbrace + o(\varepsilon)\\
&> 2\varepsilon\left(b\varphi(b)-\sum_{k=1}^\infty\left\lbrace \Phi(-b+4kb)-\Phi(b+4(k-1)b)\right\rbrace\right)  + o(\varepsilon).
\end{align*}
\item Regarding the rhs of~\eqref{eq:Inequality}, let $a = -b + \varepsilon$ and note that $\Phi(b)+\Phi(b-\varepsilon)\to 2$ as $b \to \infty$. Therefore,
\begin{equation*}
\frac{\varphi(b-\varepsilon)-\varphi(b)}{\Phi(b)+\Phi(b-\varepsilon)-1} < \frac{3}{2}\,\left( \varphi(b-\varepsilon)-\varphi(b)\right),
\end{equation*}
provided that $b$ is sufficiently large. Another Taylor expansion yields
\begin{equation*}
\frac{\varphi(b-\varepsilon)-\varphi(b)}{\Phi(b)+\Phi(b-\varepsilon)-1} < \frac{3}{2} \, \varepsilon b\varphi(b) + o(\varepsilon).
\end{equation*}
\end{itemize}
To show~\eqref{eq:Inequality}, we thus need to prove
\begin{align*}
&2\varepsilon\left(b\varphi(b)-\sum_{k=1}^\infty\left\lbrace \Phi(-b+4kb)-\Phi(b+4(k-1)b)\right\rbrace\right) - \frac{3}{2} \, \varepsilon b\varphi(b) +o(\varepsilon) \\
&\qquad = 2\varepsilon b \varphi(b)\left(\frac{1}{4} - \frac{\sum_{k=1}^\infty\left\lbrace \Phi(-b+4kb)-\Phi(b+4(k-1)b)\right\rbrace}{b\varphi(b)}\right) + o(\varepsilon) > 0.
\end{align*}
Since $\frac{\sum_{k=1}^\infty\left\lbrace \Phi(-b+4kb)-\Phi(b+4(k-1)b)\right\rbrace}{b\varphi(b)} \leq \frac{1-\Phi(b)}{b\varphi(b)}  \to 0$ as $b \to \infty$, the above inequality is true provided that $b$ is sufficiently large. This finishes the proof of Proposition~\ref{prop:BrownianMean}.
\end{proof}

\section{Computational algorithms}\label{sec:ComputationalAlgorithms}

In this section, we provide more details on the practical implementation of our methodology. For testing MCAR, we discuss a procedure based on asymptotic distributions (Section~\ref{sec:Approximationsbased}) as well as a bootstrap (Section~\ref{sec:Bootstrap}). The benchmark method of \cite{liebl2019partially} was implemented using their R-package \citep{RamesederFD2017PartiallyFD}. Random functions were expanded on a common domain using a Fourier basis of $J = 11$ functions; see their original paper for further details.

\subsection{Approximations based on asymptotic distributions}\label{sec:Approximationsbased}

The covariance function $k(s,t)$ in \eqref{eq:CovZ} is estimated by
\begin{equation*}
\widehat{k}(s,t) = \frac{1}{n}\sum_{i=1}^n\left(\frac{ \tilde{X}_i(s)\tilde{X}_i(t)\xi_{O_i}(s)\xi_{O_i}(t)\ind\{O_i \in \widehat{\mathcal{A}}\}}{\widehat{p}_{\widehat{A}}(s)\widehat{p}_{\widehat{A}}(t)} + \frac{ \tilde{X}_i(s)\tilde{X}_i(t)\xi_{O_i}(s)\xi_{O_i}(t)\ind\{O_i \in \widehat{\mathcal{B}}\}}{\widehat{p}_{\widehat{B}}(s)\widehat{p}_{\widehat{B}}(t)} \right),
\end{equation*}
where $\tilde{X}_i(t) = X_i(t) - (\hatmA(t)\ind\{O_i\in\widehat{\mathcal{A}}\} + \hatmB(t)\ind\{O_i\in\widehat{\mathcal{B}}\})$. This yields Algorithm~\ref{alg:D} and~\ref{alg:SCB}.

\vspace{-0.1cm}

\begin{algorithm}[!h]
\caption{Asymptotic $T_{\mu}$ test}\label{alg}\label{alg:D}
\nl Compute the test statistic $T_\mu$ from $(X_i, O_i)_{i=1}^n$\;

\nl Estimate the covariance function $k$ by $\widehat{k}$\;
\nl Compute the first $q$ eigenvalues $(\widehat{\lambda}_j)_{j=1}^q$  and eigenfunctions $(\widehat{\phi}_j)_{j=1}^q$  of $\widehat{k}$\;
\nl \For{$b = 1, \dots, B^*$}{
Sample $Z_1^{(b)}, Z_2^{(b)}, \dots, Z_q^{(b)}$ i.i.d.~$N(0,1)$ random variables\;
Compute $W^{(b)} = \sup_{t \in [0,1]} \abs{ \sum_{j=1}^qZ_j^{(b)} (\widehat{\lambda}_j)^{1/2}\widehat{\phi}_j(t)}$\;}
\nl Approximate the $p$-value using $T_\mu$ and $W^{(1)}, \dots, W^{(B^*)}$.
\end{algorithm}

\begin{algorithm}[!h]
\caption{Simultaneous confidence bands for $\hatmA-\hatmB$}\label{alg:SCB}

\nl Compute $\hatmA$ and $\hatmB$ from $(X_i, O_i)_{i=1}^n$\;
\nl Estimate the covariance function $k$ by $\widehat{k}$\;
\nl Compute the first $q$ eigenvalues $(\widehat{\lambda}_j)_{j=1}^q$  and eigenfunctions $(\widehat{\phi}_j)_{j=1}^q$  of $\widehat{k}$\;
\nl \For{$b = 1, \dots, B^*$}{
Sample $Z_1^{(b)}, Z_2^{(b)}, \dots, Z_q^{(b)}$ i.i.d.~$N(0,1)$ random variables\;
Compute $W^{(b)} = \sup_{t \in [0,1]} \abs{ \sum_{j=1}^q Z_j^{(b)}(\widehat{\lambda}_j)^{1/2}\widehat{\phi}_j(t)}$\;}
\nl Compute the empirical $(1-\alpha)$-quantile $\widehat{q}_\alpha$ of $W^{(1)}, \dots, W^{(B^*)}$\;
\nl Compute the simultaneous confidence band $\hatmA - \hatmB \pm \widehat{q}_\alpha/\sqrt{n}$.

\end{algorithm}

\noindent The truncation parameter $q$ is chosen according to the fraction of variance explained (FVE) criterion such that $\sum_{j=1}^q \widehat{\lambda}_j / \sum_{j} \widehat{\lambda}_j \geq 0.99$.

To approximate the limit distribution in Theorem \ref{thm:CLT_mu}, it is useful to consider the Karhunen-Loève expansion of the two-dimensional Gaussian random function $Z$,
\begin{equation*}
Z(t,z) = \sum_{j=1}^\infty \zeta_j \psi_j(t,z),\qquad t \in [0,1], z \in \Xi. 
\end{equation*}
Here, $(\psi_j)_{j=1}^\infty$ refers to the eigenfunctions of the integral operator on $H = L^2([0,1]\times \Xi, \lambda \otimes \nu)$ with kernel $\rho(s,t,z_1,z_2) = \cov(Z(s,z_1), Z(t,z_2))$. The eigenfunctions can be completed to a basis and satisfy $\langle \psi_j, \psi_k\rangle_H = \delta_{jk}$ where $ \langle f, g\rangle_H = \int_0^1 \int_\Xi f(t,z)g(t,z) \,\text{d}(\lambda \otimes \nu)(t,z)$ for elements~$f,g \in H$. Furthermore, the coefficients $(\zeta_j)_{j=1}^\infty$ are independent Gaussian random variables given by the projections $\zeta_j = \langle Z, \psi_j\rangle_H$. The $j$-th coefficient $\zeta_j$ has mean zero and variance $\kappa_j$ where $(\kappa_j)_{j=1}^\infty$ refers to the sequence of eigenvalues of the corresponding integral operator. See also \cite{zhou2014principal} for further details on principal components analysis of two-dimensional functional data. We obtain
\begin{equation*}
\int_{[0,1]\times \Xi} Z(t, z)^2 \, \dF(t,z) \quad \overset{d}{=} \quad \sum_{j=1}^\infty \kappa_j \chi_{1;j}^2,
\end{equation*}
for a sequence $(\chi_{1;j}^2)_{j=1}^\infty$ of independent chi-square distributed random variables. To estimate the eigenvalues, we draw Monte Carlo points $\{(t_k, z_k)\colon k = 1, \dots, K\}$ from the  distribution $\lambda \otimes \nu$. We then estimate $(\kappa_j)_{j=1}^q$ by the eigenvalues $(\widehat{\kappa}_j)_{j=1}^q$ of the matrixized covariance estimator,
\begin{align*}
\widehat{\rho}(t_k, t_\ell, z_k, z_{\ell})  = \frac{1}{n}\sum_{i=1}^n&\Bigg(\frac{ \tilde{\ind}_i(t_k,z_k)\tilde{\ind}_i(t_\ell,z_\ell)\xi_{O_i}(t_k)\xi_{O_i}(t_\ell)\ind\{O_i \in \widehat{\mathcal{A}}\}}{\widehat{p}_{\widehat{A}}(t_k)\widehat{p}_{\widehat{A}}(t_\ell)}\\
&  \qquad + \frac{ \tilde{\ind}_i(t_k,z_k)\tilde{\ind}_i(t_\ell,z_\ell)\xi_{O_i}(t_k)\xi_{O_i}(t_\ell)\ind\{O_i \in \widehat{\mathcal{B}}\}}{\widehat{p}_{\widehat{B}}(t_k)\widehat{p}_{\widehat{B}}(t_\ell)} \Bigg),
\end{align*}
where $\tilde{\ind}_i(t,z) = \ind\{X_i(t)\leq z\} - (\hatFA(t,z)\ind\{O_i\in\widehat{\mathcal{A}}\} + \hatFB(t,z)\ind\{O_i\in\widehat{\mathcal{B}}\})$. This yields Algorithm~\ref{alg:Dist}.

\begin{algorithm}[h]
\caption{Asymptotic $T_{F}$ test}\label{alg:Dist}

\nl Choose a measure $\nu$ on $\Xi \subseteq \mathbb{R}$\;
\nl Compute the test statistic $T_{F}$ from $(X_i, O_i)_{i=1}^n$\;
\nl Compute the first $q$ eigenvalues $(\widehat{\kappa}_j)_{j=1}^q$ of $\widehat{\rho}$\;
\nl \For{$b = 1, \dots, B^*$}{
Sample $Z_1^{(b)}, Z_2^{(b)}, \dots, Z_q^{(b)}$ i.i.d.~$N(0,1)$ random variables\;
Compute $W^{(b)} = \sum_{j=1}^q (Z_j^{(b)})^2\widehat{\kappa}_j$\;}
\nl Approximate the $p$-value using $T_{F}$ and $W^{(1)}, \dots, W^{(B^*)}$.

\end{algorithm}

\noindent The truncation parameter $q$ is chosen according to the fraction of variance explained (FVE) criterion such that $\sum_{j=1}^q \widehat{\kappa}_j / \sum_{j} \widehat{\kappa}_j \geq 0.99$.

\subsection{Bootstrap approximations}\label{sec:Bootstrap}

Another option to approximate the distribution of our test statistics is to consider a bootstrap procedure; see also \cite{efron1994introduction} for a general introduction. Bootstrap hypothesis tests for two samples of functional data have been considered in \cite{paparoditis2016bootstrap} and \cite{kraus2019inferential} in a setting similar to ours. In case of the mean comparison, the bootstrap resamples are usually drawn groupwise under the null hypothesis of equal means. A corresponding adaption of this approach to our case is depicted in Algorithm~\ref{alg:bootmean}. In contrast, for a comparison of distributions, the resamples are typically drawn from a pooled sample under the null hypothesis of equal distributions (see also \cite{paparoditis2016bootstrap} and \cite{kraus2019inferential}). However, a direct adaption to our case of MCAR testing seems impossible since the characteristics of observation patterns in the subgroups will generally be different under MCAR. For this reason, we also draw groupwise resamples for the distributional test and provide the details in Algorithm~\ref{alg:bootdist}. Algorithm~\ref{alg:bootSCB} finally provides an adaption of Algorithm~\ref{alg:SCB} using the bootstrap.

\begin{algorithm}[!h]
\caption{Bootstrap approximation for mean test $T_\mu$}\label{alg:bootmean}
\nl Compute $\hatmA$, $\hatmB$ and the test statistic $T_\mu$ from $(X_i, O_i)_{i=1}^n$\;
\nl Compute $n_A = \sum_{i=1}^n \ind\{O_i \in \widehat{\mathcal{A}}\}$ and $n_B = \sum_{i=1}^n \ind\{O_i \in \widehat{\mathcal{B}}\}$\;
\nl Create a data set $(X_j^{A}, O_j^{A})_{j=1}^{n_A}$ containing the $(X_i - \hatmA, O_i)$ with $O_i \in \widehat{\mathcal{A}}$\;
\nl  Create a data set $(X_j^{B}, O_j^{B})_{j=1}^{n_B}$ containing the $(X_i - \hatmB, O_i)$ with $O_i \in \widehat{\mathcal{B}}$\;
\nl \For{$b = 1, \dots, B^*$}{
Sample $(X_j^{A(b)}, O_j^{A(b)})_{j=1}^{n_A}$ from $(X_j^{A}, O_j^{A})_{j=1}^{n_A}$ with replacement\;
Sample $(X_j^{B(b)}, O_j^{B(b)})_{j=1}^{n_B}$ from $(X_j^{B}, O_j^{B})_{j=1}^{n_B}$ with replacement\;
Compute $T_\mu^{(b)}$ from $(X_j^{A(b)}, O_j^{A(b)})_{j=1}^{n_A}$ and $(X_j^{B(b)}, O_j^{B(b)})_{j=1}^{n_B}$\;}
\nl Approximate the $p$-value using $T_\mu$ and $T_\mu^{(1)}, \dots, T_\mu^{(B^*)}$.
\end{algorithm}

\begin{algorithm}[!h]
\caption{Bootstrap approximation for distribution test $T_{F}$}\label{alg:bootdist}
\nl Choose a measure $\nu$ on $\Xi \subseteq \mathbb{R}$\;
\nl Compute $\hatFA, \hatFB$ and the test statistic $T_{F}$ from $(X_i, O_i)_{i=1}^n$\;
\nl Compute $n_A = \sum_{i=1}^n \ind\{O_i \in \widehat{\mathcal{A}}\}$ and $n_B = \sum_{i=1}^n \ind\{O_i \in \widehat{\mathcal{B}}\}$\;
\nl Create a data set $(X_j^{A}, O_j^{A})_{j=1}^{n_A}$ containing the $(X_i, O_i)$ with $O_i \in \widehat{\mathcal{A}}$\;
\nl  Create a data set $(X_j^{B}, O_j^{B})_{j=1}^{n_B}$ containing the $(X_i, O_i)$ with $O_i \in \widehat{\mathcal{B}}$\;
\nl \For{$b = 1, \dots, B^*$}{
Sample $(X_j^{A(b)}, O_j^{A(b)})_{j=1}^{n_A}$ from $(X_j^{A}, O_j^{A})_{j=1}^{n_A}$ with replacement\;
Sample $(X_j^{B(b)}, O_j^{B(b)})_{j=1}^{n_B}$ from $(X_j^{B}, O_j^{B})_{j=1}^{n_B}$ with replacement\;
Compute $\hatFA^{(b)}$ and $\hatFB^{(b)}$ from $(X_j^{A(b)}, O_j^{A(b)})_{j=1}^{n_A}$ and $(X_j^{B(b)}, O_j^{B(b)})_{j=1}^{n_B}$\;
Compute $T_F^{(b)} = n \int_{\Xi}\int_0^1 ((\widehat{F}^{(b)}_A(t,z) - \hatFA(t,z)) - (\widehat{F}^{(b)}_B(t,z)-\hatFB(t,z))^2 \dt\, \dnu(z)$\;}
\nl Approximate the $p$-value using $T_F$ and $T_F^{(1)}, \dots, T_F^{(B^*)}$.
\end{algorithm}

\begin{algorithm}[!h]
\caption{Bootstrap confidence bands for $\hatmA-\hatmB$}\label{alg:bootSCB}
\nl Compute $n_A = \sum_{i=1}^n \ind\{O_i \in \widehat{\mathcal{A}}\}$ and $n_B = \sum_{i=1}^n \ind\{O_i \in \widehat{\mathcal{B}}\}$\;
\nl Create a data set $(X_j^{A}, O_j^{A})_{j=1}^{n_A}$ containing the $(X_i - \hatmA, O_i)$ with $O_i \in \widehat{\mathcal{A}}$\;
\nl  Create a data set $(X_j^{B}, O_j^{B})_{j=1}^{n_B}$ containing the $(X_i - \hatmB, O_i)$ with $O_i \in \widehat{\mathcal{B}}$\;
\nl \For{$b = 1, \dots, B^*$}{
Sample $(X_j^{A(b)}, O_j^{A(b)})_{j=1}^{n_A}$ from $(X_j^{A}, O_j^{A})_{j=1}^{n_A}$ with replacement\;
Sample $(X_j^{B(b)}, O_j^{B(b)})_{j=1}^{n_B}$ from $(X_j^{B}, O_j^{B})_{j=1}^{n_B}$ with replacement\;
Compute $T_{\mu}^{(b)}$ from $(X_j^{A(b)}, O_j^{A(b)})_{j=1}^{n_A}$ and $(X_j^{B(b)}, O_j^{B(b)})_{j=1}^{n_B}$\;}
\nl Compute the empirical $(1-\alpha)$-quantile $\widehat{q}_\alpha$ of $T_{\mu}^{(1)}, \dots, T_{\mu}^{(B^*)}$\;
\nl Compute the simultaneous confidence band $\hatmA - \hatmB \pm \widehat{q}_\alpha/\sqrt{n}$.
\end{algorithm}

\clearpage

\subsection{Additional results for real data examples}

The table below provides additional information for the real data sets. Here, $B^*$ denotes the number of simulated examples used to approximate the limit distributions, and, respectively, the number of bootstrap replicates. The values of $q_\mu$ and $q_F$ report the truncation parameters which were chosen by the FVE criterion as explained in Section~\ref{sec:Approximationsbased} and Section~\ref{sec:Bootstrap}.

\begin{center}
\begin{tabular}{|c|crrrrrrr|}
\hline
data        &  distribution  & $p_{\mu}$ & $p_F$ & $T_\mu$ &  $T_F$ & $B^*$ & $q_\mu$ & $q_F$\\
\hline
heart rate  &  asymptotic & 0.765   & 0.911     & 42.49 &  0.97       & 1e6   & 10 & 60\\
electricity &  asymptotic & $<0.001$ & $<0.001$ & 47.76  & 82.40   & 1e6   & 4  & 109\\
temperature &  bootstrap  &  $0.036$ & $0.054$  & 32.59  & 2.10     & 1e6   & -   &  -\\
\hline
\end{tabular}
\end{center}

\section{Additional simulations}\label{sec:AdditionalSimulations}

In this section, we check the numerical properties of our test procedures when applied to non-Gaussian synthetic data. To this end, we repeatedly draw $n$ independent copies $X_1, \dots, X_n$ from the subsample of completely observed heart rate profiles discussed in Section~\ref{sec:HeartRateData}. The error probabilities are then estimated over 1,000 simulation runs.

\subsection{Missing completely at random}

We generate pseudo-missing data as follows:~Let $D\sim U[1/4, 1]$ and set~$O = [0, D]$. In this case, all curves are observed on $[0, 1/4]$ but only a negligible fraction covers $[0,1-\varepsilon]$ for small $\varepsilon > 0$. Since $D$ is chosen independently of $X$, MCAR holds. The table below reports the estimated type~I error probabilities for a nominal level of $\alpha = 0.05$.

\begin{table}[h]
\centering
\begin{tabular}{|c|cc|cc|}
\cline{2-5}
\multicolumn{1}{c|}{} & \multicolumn{4}{c|}{data-driven clustering}  \\
\cline{2-5}
\multicolumn{1}{c|}{} & \multicolumn{2}{c|}{asymp} & \multicolumn{2}{c|}{boot}  \\
\hline
$n$ & $T_\mu$ & $T_{F}$ & $T_\mu$ & $T_{F}$  \\
\hline
100 & 0.074 & 0.058 & 0.057 & 0.061  \\
250 & 0.062 & 0.050 & 0.054 & 0.052  \\
500 & 0.065 & 0.050 & 0.060 & 0.054  \\
\hline
\end{tabular}
\caption{Simulation results under MCAR.}
\end{table}

\subsection{Violation of MCAR}

Let $\mu(t) = \expv[X(t)]$ and define
\begin{align*}
O = \begin{cases}
[0,1], \qquad & \text{with prob.~$1-p$}, \\
[0,D],\qquad & \text{with prob.~$p$,}
\end{cases}
\end{align*}
with $D = 1$ if $X(0)-\mu(0) \geq 10$ and $D \sim U[1/4, 1]$ else. A fraction $p \in \{0.5, 0.6, \dots, 1\}$ of the curves is subject to this data-dependent missingness mechanism and MCAR is thus violated. A larger number of curves is censored if $p = 1$ and we therefore expect our tests to have higher power in this case. The intuition is confirmed by the following figure which plots the estimated rejection probabilities using our tests based on asymptotic distributions. In this scenario, the conditional mean functions of complete and incomplete curves are related by an approximate vertical shift. This favors the mean test $T_\mu$ which achieves performance comparable to the distribution test.

\begin{figure}[!h]
\centering
\includegraphics[width=0.9\textwidth]{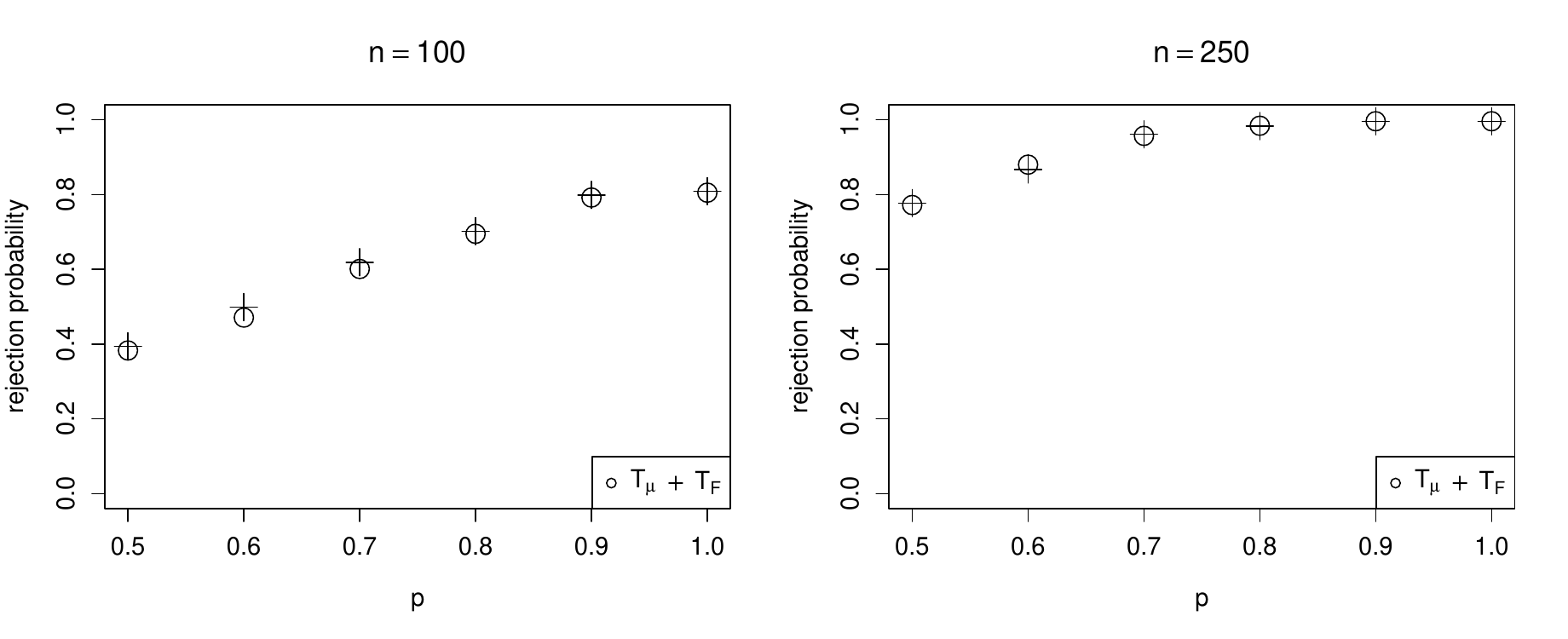}
\caption{Rejection probabilities using tests based on asymptotic distributions.}
\end{figure}

\newpage

\section{Comparison of hypotheses}

To gain further insights, we compare the two main hypotheses:
\begin{align*}
H_0\colon  & \quad \text{$X$ and $O$ are independent,} \\
H_0^F\colon & \quad  P(X(t) \leq z \vert O \in \mathcal{A}) = P(X(t) \leq z \vert O \in \mathcal{B}) \text{ for all $z \in \mathbb{R}, t \in [0,1]$}.
\end{align*}
Clearly, $H_0$ implies $H_0^F$ for any choice of the partition $\mathcal{K} = \mathcal{A} \mathbin{\dot{\cup}} \mathcal{B}$. The following examples show that the reverse implication is true under specific conditions but not in general.

\subsection{Example: $H_0$ is implied by $H_0^F$}

Let $Y$ and $W$ be two independent random functions which are independent of some observation set $O$. Define $$X(t) = Y(t) + \delta \, W(t)\ind\{O \in \mathcal{A}\}, \qquad t \in [0,1],$$ where $\delta \in \mathbb{R}$. Here, $X$ relates to a contaminated version of $Y$ and the parameter $\delta$ measures the degree of contamination. If $H_0^F$ holds, then it follows in particular that
$\var(X(t)\vert O \in \mathcal{A}) = \var(X(t)\vert O \in \mathcal{B})$ which entails $$\var(Y(t)) + \delta^2 \var(W(t)) = \var(Y(t)),$$ and thus $\delta = 0$. Since $\delta = 0$, $X = Y$ which is independent of $O$, so $H_0$ is satisfied.

\subsection{Example: $H_0$ is false but $H_0^F$ true}

Let $\zeta_1$ and $\zeta_2$ be two independent random variables with standard normal distribution. Define
\begin{align*}
X(t) = \begin{cases}
\zeta_1, \qquad & \text{for $t \in [0, 0.5)$}, \\
\zeta_2, \qquad & \text{for $t \in [0.5, 1]$,}
\end{cases} &&
O = \begin{cases}
[0,1], \qquad & \text{if $\zeta_1 \zeta_2 > 0$}, \\
[0, 0.5], \qquad & \text{otherwise.}
\end{cases}
\end{align*}
Take $t_1\in[0,0.5)$ and $t_2\in[0.5,1]$. The joint distribution of $(X(t_1),X(t_2))$ clearly depends on whether $\{\zeta_1\zeta_2>0\}$ or $\{\zeta_1\zeta_2\leq0\}$ and $H_0$ is violated. On the other hand, the marginal distributions of $X(t_1)$ and $X(t_2)$ are standard normal, both on the event $\{\zeta_1\zeta_2>0\}$ and on the event $\{\zeta_1\zeta_2\le0\}$. Therefore, $H_0^F$ holds as it is based on a comparison of univariate marginal characteristics.

\putbib[ref]
\end{bibunit}

\end{document}